\newcommand{\prref}[1]{\prettyref{#1}}
\colorlet{DMnormalbackcolor}{gray!25}
\colorlet{DMlightbackcolor}{gray!10}
\colorlet{DMmediumbackcolor}{gray!20}
\colorlet{DMdarkbackcolor}{gray!30}
\definecolor{blue}{rgb}{0.211,0.211,0.656}
\definecolor{dgreen}{rgb}{0.0,0.4,0.0}
\newtheorem{theorem}{{\bf Theorem}}[]
\newtheorem{corollary}[theorem]{{\bf Corollary}}
\newtheorem{example}[theorem]{{\bf Example}}
\newtheorem{lemma}[theorem]{{\bf Lemma}}
\theoremstyle{plain}
\newtheorem{proposition}[theorem]{{\bf Proposition}}
\newcommand{\refthm}[1]{Theorem~\ref{#1}}
\newcommand{\reflem}[1]{Lemma~\ref{#1}}
\newcommand{\refprop}[1]{Proposition~\ref{#1}}
\newcommand{\reffig}[1]{Figure~\ref{#1}}
\newcommand{\refex}[1]{Example~\ref{#1}}
\newcommand{\set}[2]{\left\{#1\mathrel{\left|\vphantom{#1}\vphantom{#2}\right.}#2\right\}}
\newcommand{\oneset}[1]{\left\{\mathinner{#1}\right\}}
\newcommand{\smallset}[1]{\left\{\mathinner{#1}\right\}}
\newcommand{\os}[1]{\oneset{#1}}
\newcommand{\arc}[1]{\overset{#1}\ra}
\newcommand{\abs}[1]{\left|\mathinner{#1}\right|}
\newcommand{\dbracket}[1]{\left\llbracket \mathinner{#1} \right\rrbracket}
\newcommand{\N}{\mathbb{N}}
\renewcommand{\phi}{\varphi}
\newcommand{\oo}{\omega}
\newcommand{\alp}{\alpha}
\newcommand{\bet}{\beta}
\newcommand{\gam}{\gamma}
\newcommand{\lam}{\lambda}
\newcommand{\sig}{\sigma}
\newcommand{\cH}{\mathcal{H}}
\newcommand{\cC}{\mathcal{C}}
\newcommand{\bfAb}{\mathbf{Ab}}
\newcommand{\bfSol}{\mathbf{Sol}}
\newcommand{\bfI}{\mathbf{1}}
\newcommand{\bfH}{\mathbf{H}}
\newcommand{\bfU}{\mathbf{U}}
\newcommand{\bfG}{\mathbf{G}}
\newcommand{\bfV}{\mathbf{V}}
\newcommand{\ov}[1]{\overline{#1}}
\newcommand{\oi}[1]{{#1}^{-1}}
\newcommand\ra{\longrightarrow}
\newcommand{\wt}[1]{\widetilde{#1}}
\newcommand\subsem{subsemigroup\xspace}
\newcommand{\eg}{eg.\xspace}
\newcommand{\IFF}{if and only if\xspace}
\newcommand{\homo}{homomorphism\xspace}
\newcommand{\vars}{varieties\xspace}
\newcommand{\FO}{\mathrm{FO}}
\newcommand{\LTL}{\mathrm{LTL}}
\newcommand{\SF}{\mathrm{SF}}
\newcommand{\AP}{\mathbf{Ap}}
\newcommand{\schuetz}{Schützenberger\xspace}
\newcommand{\sse}{\subseteq}
\newcommand{\es}{\emptyset}
\newcommand{\sm}{\setminus}
\newcommand{\svarietyfont}[1]{\ensuremath{\mathcal{#1}}\xspace}
\newcommand{\SD}{\ensuremath{\mathrm{SD}}}
\newcommand{\SDG}[1]{\ensuremath{\mathrm{SD}_\svarietyfont{#1}}}
\newcommand{\SDH}[1]{\ensuremath{\mathrm{SD}_{#1}}}
\newcommand{\varietyfont}[1]{\ensuremath{\mathbf{#1}}\xspace}
\newcommand{\varietyG}{\varietyfont{G}}
\newcommand{\varietyH}{\varietyfont{H}}
\newcommand{\varietyHline}{\overline{\varietyfont{H}}}
\newcommand{\Synt}{\ensuremath{\mathrm{Synt}}}
\newcommand{\BRExp}[1]{\ensuremath{\mathrm{Exp(#1)}}}
\newcommand{\LocRees}[1]{\ensuremath{\mathrm{LocRees(#1)}}}
\newcommand{\Rees}[1]{\ensuremath{\mathrm{Rees(#1)}}}
\title{Characterizing classes of regular languages using prefix codes of bounded synchronization delay}
\author{Volker Diekert and Tobias Walter\footnote{Supported by the German Research Foundation (DFG) under grant DI 435/6-1.}}
\date{}
\begin{document}
\maketitle
\begin{abstract}
In this paper we continue a classical work of Sch\"utzenberger on codes with bounded synchronization delay. He was interested to characterize those regular languages where the groups in the syntactic monoid belong to a variety $\bfH$. He allowed operations on the language side which are union, intersection, concatenation and modified Kleene-star involving 
a mapping of a prefix code of bounded synchronization delay to a group $G\in \bfH$, but no complementation. 
In our notation this leads to the language classes $\SDH{G}(A^\infty)$ and $\SDG{\bfH}(A^\infty)$. Our main result  shows that $\SDG{\bfH}(A^\infty)$ always corresponds to the languages having syntactic monoids where all subgroups are in $\bfH$. 
Sch\"utzenberger showed this 
for a variety $\bfH$ if $\bfH$ contains Abelian groups, only. 
Our method shows the general result for all $\bfH$ directly on finite and infinite words. Furthermore, we introduce the notion of \emph{local Rees products} which refers to a simple type of classical Rees extensions. We give a decomposition of a monoid in terms of its groups and local Rees products. This gives a somewhat similar, but simpler decomposition than in  Rhodes' synthesis theorem. Moreover, we need a singly exponential number of operations, only. Finally, our decomposition yields an answer to a question in a recent paper of Almeida and Kl{\'i}ma about varieties that are closed under Rees products.
\end{abstract}
\section{Introduction}\label{sec:intro}
A fundamental result of Sch\"utzenberger characterizes the class of star-free languages $\SF$ as exactly those languages which are group-free, that is, aperiodic \cite{sch65sf}. One usually abbreviates this result by $\SF = \AP$. Sch\"utzenberger also found another, but less prominent characterization of $\SF$: the star-free languages are exactly the class of languages which can be defined inductively by finite languages and closure under union, concatenation, and the Kleene-star restricted to prefix codes of bounded synchronization delay \cite{Schutzenberger1975d}. 
This result is abbreviated by $\AP = \SD$. It is actually stronger than the famous $\SF = \AP$ because $\SD \sse \SF \sse \AP$ is easy, so  
$\SF = \AP$ follows directly from $\AP \sse \SD$.
The result $\AP = \SD$ has been extended to infinite words first in \cite{DiekertKufleitner14tocs}. The extension to infinite words became possible thanks to a ``local divisor approach'', which also is a main tool in this paper. 

Sch\"utzenberger did not stop by showing $\AP = \SD$. In retrospective he started a program: in \cite{Schutzenberger1974b} he was able to prove an analogue of $\AP = \SD$ for languages where syntactic monoids have Abelian subgroups, only. In our notation $\AP = \SD$ means
$\ov \bfI(A^\infty)= \SDH{\bfI}(A^\infty)$; and the main result in \cite{Schutzenberger1974b} is ``essentially'' equivalent to  $\ov \bfAb(A^*) = \SDH{\bfAb}(A^*)$. (We write ``essentially'' because using the structure theory of Abelian groups, a sharper version than $\ov \bfAb(A^*) = \SDH{\bfAb}(A^*)$ is possible.) 
The proofs \cite{Schutzenberger1974b} use deep results in semigroup theory; and no such result beyond Abelian groups was known so far.
Our result generalizes $\ov \bfAb(A^\infty)= \SDH{\bfAb}(A^\infty)$ to every variety $\bfH$ of finite groups: we show $\ov \bfH(A^\infty)= \SDH{\bfH}(A^\infty)$. We were able to prove it with much less technical machinery compared to \cite{Schutzenberger1974b}. For example, no knowledge in Krohn-Rhodes theory is required.  

Actually, our result is a generalization of $\ov \bfAb(A^*) = \SDH{\bfAb}(A^*)$ \cite{Schutzenberger1974b} 
and also of $\AP(A^\infty) = \SDH{}(A^\infty)$ \cite{DiekertKufleitner14tocs}. More precisely, we give a characterization of languages which are recognized by monoids where all subgroups belong to ${\bfH}$. The characterization uses an inductive scheme
starting with all finite subsets of finite words, allows concatentation, union, no(!) complementation, but a restricted use of a generalized Kleene-star (and $\oo$-power in the case of infinite words). Let us explain the 
\emph{generalized Kleene-star} in our context.
Instead of putting the star above a single language, consider first a disjoint union $K=\bigcup\set{K_g}{g\in G}$ where $G$ is a finite group and each $K_g$ is regular in $A^*$. The ``generalized star'' associates with such a disjoint union the following language: 
 $$\set{u_{g_1}\cdots u_{g_k}\in K^*}{u_{g_i}\in K_{g_i}\wedge g_1 \cdots {g_k}= 1 \in G}.$$ 
Clearly, we obtain a regular language, but without any restriction, allowing such a ``general star'' yields all regular languages,  even in the case of the trivial group. So, the construction is of no interest without a simultaneous restriction. The restriction considered in 
\cite{Schutzenberger1974b} yields an inductive scheme to define a class $\cC$. The restriction says 
that such a generalized Kleene-star is allowed only over a disjoint union  $K=\bigcup\set{K_g}{g\in G}$ where each $K_g$ already belongs to $\cC$ and where $K$ is, in addition, a prefix code of bounded synchronization delay. The initials in ``synchronization delay'' led to the notation $\SD$; and an indexed version $\SDH{G}$ (resp.~$\SDH{\bfH}$) 
refers to ``synchronization delay over $G$'' (resp.~over a finite group in ${\bfH}$). Since we also deal with infinite words we apply the same restriction 
to $\oo$-powers. 

Our results give also a new characterization for various other classes. 
For example, by a result of Straubing, Th{\'e}rien and Thomas \cite{stt95IC}, the class of languages, 
having syntactic monoids where all subgroups are solvable, coincides with $(\FO+\mathrm{MOD})[<]$. Here, $(\FO+\mathrm{MOD})[<]$ means the class of languages defined by the logic $(\FO+\mathrm{MOD})[<]$. Thus, we are able to give a new language characterization: 
$(\FO+\mathrm{MOD})[<](A^\infty)= \SDH{\bfSol}(A^\infty).$

Moreover, as a sort of byproduct of $\ov \bfH = \SDH{\bfH}$, we obtain a simple and purely algebraic characterization of the monoids in $\overline{\bfH}$. Every monoid in $\overline{\bfH}$ can be decomposed in at most exponentially many iterated Rees products of groups in $\bfH$. The iteration uses only a very restricted version of Rees extensions: \emph{local Rees products}. This means
we obtain every finite monoid which is not a group as a divisor of a Rees extension between two proper divisors of $M$, one of them a proper submonoid, the other one a ``local divisor''.  

Our decomposition result is similar to the synthesis theory of Rhodes and Allen \cite{RhodesA73}. Moreover, our technique gives a singly exponential bound on the number of operations whereas no such bound was known by \cite{RhodesA73}.
Finally, using this decomposition, we answer a recent question of Almeida and Kl{\'i}ma \cite{AlmeidaK16} concerning varieties which are closed under Rees products. 

\section{Preliminaries}\label{sec:prelim}
Throughout, $A$ denotes a finite alphabet and $A^*$ is the free monoid over $A$. It consists of all finite words. The empty word is denoted by $1$ as the neutral elements in other monoids or groups. 
The set of non-empty finite words is $A^+$; it is the free semigroup over $A$. By $A^\omega$ we denote the set of all infinite words with letters in $A$. 
For a set $K\subseteq A^*$, we let $K^\omega = \set{u_1u_2\cdots }{u_i \in K \text{ non-empty}, i\in \mathbb N} \subseteq A^\omega$. In particular, $K^\omega = (K\setminus \os{1})^\omega$.
Since our results concern finite and infinite words, it is convenient to treat finite and infinite words simultaneously. We define $A^\infty = A^* \cup A^\omega$ to be the set of finite or infinite words. Accordingly, a  \emph{language} $L$ is a subset of $A^\infty$. We say 
that $L$ is \emph{regular}, if  first, $L\cap A^*$ is regular and second, 
$L\cap A^\oo$  is $\oo$-regular in the standard meaning of formal language theory. 
In order to study regular languages algebraically, one considers finite monoids. 
A \emph{divisor} of a monoid $M$ is a monoid $N$ which is a homomorphic image of a \subsem of $M$. In this case we write $N \preceq M$.
A \subsem $S$ of $M$ is in our setting a divisor \IFF $S$ is a monoid (but not necessarily a submonoid of $M$). 
A \emph{variety} of finite monoids -- hence, in Birkhoff's setting:  a \emph{pseudovariety} -- is a class of finite monoids $\bfV$ which is closed under finite direct products and under division:
\begin{itemize}
\item 
If $I$ is a finite index set and $M_i \in \bfV$ for each $i\in I$, then $\prod_{i\in I} M_i \in \bfV$. In particular, the trivial group $\os 1$ belongs to $\bfV$. 
\item 
If $M \in \bfV$ and $N \preceq M$, then $N \in \bfV$.
\end{itemize}

Classical formal language theory states ``regular'' is the same as ``recognizable''. This means: $L\sse A^*$  is regular \IFF its syntactic 
monoid is finite; $L\sse A^\oo$ is regular \IFF  its syntactic monoid (in the sense of Arnold) is finite and, in addition, $L$ is saturated by the syntactic congruence, see \eg \cite{pp04,tho90handbook}. Here we use a notion of recognizability which applies to languages $L \subseteq A^\infty$.
Let $\varphi : A^* \to M$ be a homomorphism to a finite monoid $M$.
First, we define a relation $\sim_\varphi$ as follows. 
If $u\in A^*$ is a finite word, then we write $u \sim_\varphi v$
if $v$ is finite and $\phi(u) = \phi(v)$. 
If $u\in A^\oo$ is an infinite word, then we write $u \sim_\varphi v$
if $v$ is infinite and if there are factorizations 
$u = u_1u_2 \cdots $ and $v= v_1v_2\cdots$ into finite nonempty words 
such that $\phi(u_i) = \phi(v_i)$ for all $i\geq 1$.
It is easy to see that $\sim_\phi$ is not transitive on infinite words, in general. Therefore, we consider its transitive closure $\approx_\varphi$.
If $u,v\in A^*$, then we have 
$$u \sim_\varphi v \iff u \approx_\varphi v \iff \phi(u) = \phi(v).$$
If $\alp,\bet\in A^\oo$, then we have $\alp \approx_\varphi \beta$
\IFF there is sequence of infinite words $\alp_0, \ldots \alp_k$ such that 
$$\alp = \alp_0 \sim_\varphi \cdots \sim_\varphi \alp_k = \bet.$$
We say that $L\subseteq A^\infty$ is \emph{recognizable} by $M$ if there exists a homomorphism $\varphi : A^* \to M$ such that $u \in L$ and $u \sim_\varphi v$ implies $v \in L$. We also say that $M$ or  $\phi$ recognizes $L$
in this case. 

The connection to the classical notation is as follows. A regular language $L\subseteq A^\infty$ is recognizable (in our sense) by $\phi$ \IFF the syntactic monoids of $L\cap  A^*$ and $L\cap A^\oo$ are divisors of $M$
(in the classical sense). 

Every variety $\bfV$ defines a family of regular languages $\bfV(A^\infty)$ as follows: we let $L \in \bfV(A^\infty)$ if there exists a monoid $M\in \bfV$ which recognizes $L$. 
Further, we define $\bfV(A^*) = \set{L\subseteq A^*}{L \in \bfV(A^\infty)}$ and $\bfV(A^\omega) = \set{L\subseteq A^\omega}{L \in \bfV(A^\infty)}.$
A variety of finite groups is a variety of finite monoids which contains only groups.
Throughout $\bfH$ denotes a variety of finite groups. Special cases are the \vars 
\begin{itemize}
	\item $\bfI$: 
	the trivial group $\os 1$, only. 
	\item $\bfAb$: 
	all finite Abelian groups. 
	\item $\bfSol$: 
	all finite solvable groups. 
	\item $\bfSol_q$: 
	all finite solvable groups where the order is divisible by some power of $q$.
	\item $\bfG$: 
	all finite groups.
\end{itemize}
 According to standard notation $\ov \bfH$ denotes the variety of finite
monoids where all subgroups belong to $\bfH$. 
It is not completely obvious, but a classical fact \cite{eil76}, that $\ov \bfH$ is indeed a variety. 
In fact, it is the maximal variety $\bfV$ such that $\bfV \cap \bfG = \bfH$. 

Clearly, $\ov \bfG$ is the class of all finite monoids. 
The most prominent subclass is $\ov \bfI$: it is the variety 
of aperiodic monoids $\AP$. 
The class $\AP(A^\infty) =\ov\bfI(A^\infty)$ admits various other characterizations as subsets of $A^\infty$. For example, it is the class of star-free languages $\SF(A^\infty)$, it is the class of first-order definable languages, and it is the class of definable languages in linear temporal logic over finite or infinite words: $\LTL(A^\infty)$ . 

{\bf Local divisors.}
Let $M$ be a finite monoid and $c\in M$. 
Consider the set $cM \cap Mc$ with a new multiplication $\circ$ which is defined as follows: 
$$mc \circ cn = mcn.$$
A straightforward calculation shows that 
$cM \cap Mc$ becomes a monoid with this operation where the
neutral element of $M_c$ is $c$. Thus, the structure $M_c = (cM \cap Mc, \circ, c)$ defines a monoid. We say that $M_c$ is the \emph{local divisor} of $M$ at $c$. If $c$ is a unit, then $M_c$ is isomorphic to $M$. If $c= c^2$, then $M_c$ is the standard ``local monoid'' at the idempotent $c$. 

The important fact is that $M_c$ is always a divisor of $M$ and that 
$\abs{M_c}<\abs M$ as soon as $c$ is not a unit of $M$. 
Indeed, the mapping $\lam_c : \set{x\in M}{cx \in Mc} \to M_c$ given by $\lam_c(x) = cx$ is a surjective \homo. Moreover, 
if $c$ is not a unit, then $1 \notin cM \cap Mc$, hence 
$\abs{M_c}<\abs M$.
Thus, if $M$ belongs to some variety $\bfV$, then $M_c$ belongs to the same variety. If $M$ is not a group, then we find some nonunit $c\in M$ and
the local divisor $M_c$ is smaller than $M$. This makes the construction useful for induction. For a survey on the local divisor technique we refer to \cite{DiekertK2015tcs}. 

{\bf Rees extensions.}
Let $N, L$ be monoids and $\rho: N \to L$ be any mapping. 
The \emph{Rees extension} $\Rees{N,L,\rho}$ is a classical construction
for monoids \cite{pin86,rs09qtheory}, frequently described in terms of matrices. 
Here, we use an equivalent definition as in \cite{DiekertKW12tcs}.  
As a set we define 
$$\Rees{N,L,\rho} = N \cup N\times L \times N.$$ 
The multiplication $\cdot$ on $\Rees{N,L,\rho}$ is given by 
\begin{align*}
n\cdot n' &= nn' &&\text{for } n,n'\in N,\\
n\cdot (n_1,m,n_2) \cdot n' &= (nn_1, m , n_2n') &&\text{for } n,n',n_1,n_2\in N, m\in L,\\
(n_1,m,n_2)\cdot (n_1',m',n_2') &= (n_1,m\rho(n_2n_1')m',n_2') &&\text{for } n_1,n_1',n_2,n_2'\in N, m,m'\in L.
\end{align*}
The neutral element of $\Rees{N,L,\rho}$ is $1 \in N$ and 
$N \sse \Rees{N,L,\rho}$ is an embedding of monoids. In general, $L$ is not a divisor of $\Rees{N,L,\rho}$. The following property holds. 
\begin{lemma}\label{lem:Reesdivisors}
	Let $N \preceq N'$ and $L \preceq L'$. Given $\rho : N \to L$, there exists a mapping $\rho' : N' \to L'$ such that $\Rees{N,L,\rho}$ is a divisor of $\Rees{N',L',\rho'}$.
\end{lemma}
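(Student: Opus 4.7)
The plan is to unfold the meaning of ``divisor'' on both sides, lift the data of $\rho$ to $N'$ via a choice function, and then exhibit an explicit submonoid of $\Rees{N',L',\rho'}$ that projects onto $\Rees{N,L,\rho}$.

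First, since $N \preceq N'$ and $L \preceq L'$, there exist submonoids $S \subseteq N'$ and $T \subseteq L'$ together with surjective homomorphisms $\pi_N : S \twoheadrightarrow N$ and $\pi_L : T \twoheadrightarrow L$. To construct $\rho' : N' \to L'$, for every $s \in S$ I pick (using the surjectivity of $\pi_L$) some element $\rho'(s) \in T$ with $\pi_L(\rho'(s)) = \rho(\pi_N(s))$; for $s \in N' \setminus S$ set $\rho'(s) := 1_{L'}$. The key point is that $\rho'$ is merely a mapping, so no compatibility beyond this pointwise choice is needed—this is why the argument goes through smoothly.

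Next, inside $R' := \Rees{N',L',\rho'}$ consider the subset
\[
P := S \;\cup\; S \times T \times S.
\]
I would verify that $P$ is a submonoid of $R'$: products of the shape $s\cdot s'$ stay in $S$ because $S$ is a submonoid; mixed products $s\cdot(s_1,t,s_2)$ and $(s_1,t,s_2)\cdot s'$ stay in $S \times T \times S$ for the same reason; and for the critical case
\[
(s_1,t,s_2)\cdot(s_1',t',s_2') \;=\; \bigl(s_1,\; t\,\rho'(s_2 s_1')\,t',\; s_2'\bigr),
\]
the middle entry lies in $T$ because $s_2 s_1' \in S$, so $\rho'(s_2 s_1') \in T$ by construction, and $T$ is a submonoid of $L'$.

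Finally, define $\pi : P \to \Rees{N,L,\rho}$ by $\pi(s) := \pi_N(s)$ and $\pi(s_1,t,s_2) := (\pi_N(s_1), \pi_L(t), \pi_N(s_2))$. Surjectivity is immediate from the surjectivity of $\pi_N$ and $\pi_L$. To check that $\pi$ is a homomorphism, the only non-routine case is the third one above, where one must compare
\[
\pi_L\bigl(t\,\rho'(s_2 s_1')\,t'\bigr) \;=\; \pi_L(t)\,\pi_L(\rho'(s_2 s_1'))\,\pi_L(t')
\]
with the corresponding product in $\Rees{N,L,\rho}$, namely $\pi_L(t)\,\rho(\pi_N(s_2)\pi_N(s_1'))\,\pi_L(t')$. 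These agree precisely by the defining property $\pi_L(\rho'(s)) = \rho(\pi_N(s))$ of our chosen lift, combined with the fact that $\pi_N$ is a homomorphism. Thus $\Rees{N,L,\rho}$ is a homomorphic image of the submonoid $P \subseteq R'$, i.e.\ a divisor of $\Rees{N',L',\rho'}$.

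There is no real obstacle here; the only subtle point is to realize that $\rho'$ need not be a homomorphism, so the pointwise lift via the axiom of choice (on a finite set!) suffices, and the rest is a direct verification.
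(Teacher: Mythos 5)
Your proof is correct and takes essentially the same approach as the paper: both arguments lift $\rho$ pointwise through the surjection onto $L$ by choosing preimages and then verify that the componentwise map on triples is a homomorphism, the only nontrivial check being the middle coordinate. The difference is purely organizational --- the paper factors the divisor relation into a submonoid-embedding step and a surjective-quotient step and composes them by transitivity of $\preceq$, whereas you carry out both at once inside the single subsemigroup $S \cup S \times T \times S$ (note only that, with the paper's definition of divisor, $S$ and $T$ need merely be subsemigroups that are monoids in their own right rather than submonoids, which changes nothing in your argument).
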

\begin{proof}
First, assume that $N$ (resp.~$L$) is submonoid in $N'$ (resp.~$L'$). Let $\rho' : N' \to L'$ be any function such that $\rho'|_N = \rho$. 
The mapping $\pi : \Rees{N,L,\rho} \to \Rees{N',L',\rho'}$ given by $\pi(n) = n$ and $\pi(n_1,{\ell},n_2) = (n_1,{\ell},n_2)$ is an injective homomorphism.

Second, let $\varphi : N' \to N$ and $\psi : L' \to L$ be surjective homomorphisms. Let $\rho' : N' \to L'$ be a function such that $\rho'(n) \in \psi^{-1}(\rho(\varphi(n)))$. Let $\pi : \Rees{N',L',\rho'} \to \Rees{N,L,\rho}$ be the mapping defined by $\pi(n) = \varphi(n)$ and $\pi(n_1,{\ell},n_2) = (\varphi(n_1),\psi({\ell}),\varphi(n_2))$. It is clear that $\pi$ is surjective. It is a homomorphism since 
\begin{align*}
\pi((n_1,{\ell},n_2)\cdot (n_1',{\ell}',n_2')) = \pi(n_1,{\ell} \rho'(n_2n_1') {\ell}', n_2') 
&= (\varphi(n_1),\psi({\ell})\underbrace{\psi(\rho'(n_2n_1'))}_{=\rho(\varphi(n_2n_1'))}\psi({\ell}'),\varphi(n_2')) \\
= (\varphi(n_1),\psi({\ell}),\varphi(n_2))\cdot (\varphi(n_1'),\psi({\ell}'),\varphi(n_2')) 
&= \pi(n_1,{\ell},n_2)\cdot \pi(n_1',{\ell}',n_2'). 
\end{align*}
The result follows because $\preceq$ is transitive. 
\end{proof}

We are mainly interested in the case where $N$ and $L$ are proper divisors of a given finite monoid $M$. This leads to the notion of local Rees monoids.  More precisely, let $M$ be a finite monoid, $N$ by a proper submonoid of $M$ and $M_c$ be a local divisor of $M$ at $c$ where $c$ is not a unit. 
The \emph{local Rees product} $\LocRees{N,M_c}$ is defined as the Rees extension $\Rees{N,M_c,\rho_c}$ where $\rho_c$ denotes the mapping
 $\rho_c:N\to M_c; x \mapsto cxc$.

For a variety $\bfV$ we define $\Rees{\bfV}$ to be the least variety which contains $\bfV$ and is closed under taking Rees products and $\LocRees{\bfV}$ to be the least variety which contains $\bfV$ and is closed under local Rees products. 

\subsection{\schuetz{}'s $\SD$ classes}\label{sec:sd}
 Schützenberger gave a language theoretical characterization of the class of star-free languages  $\SF(A^*)$ avoiding complementation, but
allowing the star-operation to prefix codes of bounded synchronization delay \cite{Schutzenberger1975d}.

A language $K \sse A^+$ is called \emph{prefix code} if it is \emph{prefix-free}. That is:  $u, uv \in K$
implies $u=uv$. A prefix-free language $K$ is a code
since every word $u \in K^*$
admits a unique factorization $u = u_1 \cdots u_k$ with $k \geq 0$
and $u_i \in K$. 
Note that the empty set $\es$ is considered to be a prefix code. 
More generally, if $L\sse A^+$ is any subset, then 
$K=L\sm LA^+$ is a prefix code. 
A prefix code $K$ has \emph{bounded synchronization delay} if for some
$d \in \N$ and for all $u,v, w \in A^*$ we have:
$ 
  \text{if \,} uvw\in K^* \,\text{ and }\, v \in K^d 
  \text{, \,then }\, uv \in K^*
$. 
Note that the condition implies that for all $uvw\in K^*$
with $v \in K^d$, we have $w\in K^*$, too. 
If $d$ is given explicitly, $K$ has said to have synchronization
delay $d$.  Every subset $B \sse A$ (including the empty set) yields a prefix code
with synchronization delay $0$.  If we have $c\in A \sm B$, then 
$B^*c$ is a prefix code with synchronization delay $1$.
If $K$ is any prefix code with (or without) bounded synchronization delay, then $K^m$ is a prefix code for all $m \in \N$, but for $m \geq 2$ it is never of bounded synchronization delay.

Let $G$ be a finite group. By $\SDH{G}(A^\infty)$ we denote the set of regular languages which is inductively defined as follows. 
\begin{enumerate}
\item We let $\es \in \SDH{G}(A^\infty)$ and $\smallset{a} \in
  \SDH{G}(A^\infty)$ for all letters $a \in A$.
\item If $L, K \in\SDH{G}(A^\infty)$, then 
$L \cup K$ and $(L\cap A^*)\cdot K$ are both in $\SDH{G}(A^\infty)$.
\item Let $K\sse A^+$ be a prefix code of bounded synchronization delay and $\gamma_K : K \to G$ be any mapping of $K$ to the group $G$ such that $\gamma_K^{-1}(g) \in \SDH{G}(A^\infty)$ for all $g\in G$.  
We let  $\gamma^{-1}(1) \in \SDH{G}(A^\infty)$ 
and $\gamma^{-1}(1)^\oo\in \SDH{G}(A^\infty)$, where $\gam: K^* \to G$ denotes the canonical extension of $\gam_K$ to a \homo from the free submonoid $K^* \sse A^*$ to $G$.
\end{enumerate}

\noindent We also define 
\begin{align*}\SDH{G}(A^*)=\set{L\sse A^*}{L\in \SDH{G}(A^\infty)} &\text{\quad and \quad} \SDH{G}(A^\oo)=\set{L\sse A^\oo}{L\in\SDH{G}(A^\infty)}.
\end{align*}
Note that for every \homo $\gam: A^*\to G$ we have 
$\gamma^{-1}(1) \in \SDH{G}(A^*)$ and $\gamma^{-1}(1)^\oo \in \SDH{G}(A^\oo)$.
This follows because first, $A$ is a prefix code of bounded synchronization delay and second, all finite subsets of $A$ are in $\SDH{G}(A^*)$. 

Unlike the case of star-free sets, the inductive
definition of $\SDH G(A^\infty)$ does not use any complementation. 
By induction: for $L \sse A^\infty$ we have 
$L \in \SDH G(A^\infty)$ \IFF we can write $L = L_1 \cup L_2$ with $L_1 \in \SDH G(A^*)$  and $L_2 \in \SDH G(A^\oo)$.
In the special case where $G=\os 1$ is the trivial group,  we also simply write $\SD$ instead of $\SDH {\os 1}$. 
In this case 
the third condition can be rephrased in simpler terms as follows. 
\begin{itemize}
\item If $K\in \SDH{}(A^*)$ is a prefix code of bounded synchronization delay, then $K^*\in \SDH{}(A^*)$ and 
$K^\oo \in \SDH{}(A^\oo)$.
\end{itemize}

In \cite{Schutzenberger1974b} Sch\"utzenberger showed  (using a different notation)
$\SDH{\bfH}(A^*) \sse \ov {\bfH}(A^*)$, but the converse only for $\bfH \subseteq \bfAb$, see \prref{prop:schuetzSDG} for the first inclusion.
Our aim is to show $\ov {\bfH}(A^\infty) \sse \SDH{\bfH}(A^*)$ for all $\bfH$, cf.~\prref{thm:characterizationSD}.  
We begin with a technical lemma. 

\begin{lemma}\label{lem:generalstar}
Let	$K\sse A^+$ be a prefix code of bounded synchronization delay and let $\gamma : K^* \to G$ be a \homo such that $\gamma^{-1}(g)\cap  K\in \SDH{G}(A^*)$ for all $g\in G$, then we have $\gamma^{-1}(g) \in \SDH{G}(A^*)$ for all $g\in G$.
\end{lemma}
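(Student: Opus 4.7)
The plan is to handle $g=1$ by a direct application of rule (3) of the definition of $\SDH{G}(A^\infty)$, and then bootstrap to arbitrary $g\in G$ via a combinatorial decomposition combined with an induction on $|G|$.

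First, I would observe that the hypothesis $\gamma^{-1}(g)\cap K\in\SDH{G}(A^*)$ for all $g\in G$ is precisely the premise of rule (3), yielding $\gamma^{-1}(1)\in\SDH{G}(A^*)$ immediately. For $g\neq 1$, I would exploit a first-departure-from-$1$ decomposition: any $u\in\gamma^{-1}(g)$ admits a unique factorization $u=v\cdot k\cdot w$ in $K^{*}$, where $v$ is the longest initial $K$-prefix with $\gamma(v)=1$, $k\in K$ is the next $K$-factor (which necessarily satisfies $\gamma(k)=h\neq 1$), and $w\in K^{*}$ satisfies $\gamma(w)=h^{-1}g$. Summing over such decompositions yields
\[\gamma^{-1}(g)\;=\;\bigcup_{h\in G\setminus\{1\}}\gamma^{-1}(1)\cdot(K\cap\gamma^{-1}(h))\cdot\gamma^{-1}(h^{-1}g),\]
a finite linear system for the unknowns $\{\gamma^{-1}(g)\}_{g\in G}$ whose coefficients $\gamma^{-1}(1)$ and $K\cap\gamma^{-1}(h)$ already lie in $\SDH{G}(A^*)$.

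To resolve this system within $\SDH{G}(A^*)$, I would induct on $|G|$, the base case $|G|=1$ being covered by the previous step. For the inductive step, I would pick a proper subgroup $H<G$ (for instance $H=\langle\gamma(K)\rangle$ when $\gamma$ is not surjective, or a maximal proper subgroup otherwise) and consider the first-visit-to-$\gamma^{-1}(H)$ prefix code $K_H\subseteq K^{+}$, whose elements are those $u\in K^{+}$ with $\gamma(u)\in H$ but no proper $K^{+}$-prefix in $\gamma^{-1}(H)$. A short argument shows that $K_H$ inherits bounded synchronization delay from $K$ and that $(K_H)^{*}=\gamma^{-1}(H)$ as subsets of $A^{*}$. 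Applying the induction hypothesis to the restriction $\gamma\colon(K_H)^{*}\to H$, together with the auxiliary inclusion $\SDH{H}(A^*)\subseteq\SDH{G}(A^*)$ valid for any subgroup $H\leq G$, yields $\gamma^{-1}(h)\in\SDH{G}(A^*)$ for each $h\in H$. For $g\notin H$, a dual first-exit-from-$H$ decomposition expresses $\gamma^{-1}(g)$ as a finite union of products of languages all lying in $\SDH{G}(A^*)$.

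The main technical hurdles I anticipate are (i) verifying the bounded synchronization delay property for auxiliary codes such as $K_H$, where the delay of $K_H$ must be bounded in terms of the delay of $K$ and $|G|$; (ii) the auxiliary inclusion $\SDH{H}(A^*)\subseteq\SDH{G}(A^*)$ for subgroups, which follows by a straightforward induction on the inductive definition of $\SDH{G}(A^\infty)$ together with an embedding of $H$ into $G$; and (iii) handling the first-exit-from-$H$ decomposition when $H$ is not normal in $G$, where the coset structure complicates the argument and may require a careful choice of $H$ or an iteration through a normal chain.
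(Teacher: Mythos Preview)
Your initial observation and the first-departure decomposition are correct, but the proposed resolution via induction on $|G|$ has a real gap. To apply the inductive hypothesis to $\gamma\colon (K_H)^*\to H$, you need to verify that $K_H\cap\gamma^{-1}(h)\in\SDH{H}(A^*)$ for every $h\in H$. You do not address this, and it is not a routine check: the language $K_H\cap\gamma^{-1}(h)$ consists of all $u_1\cdots u_m\in K^+$ whose running products $\gamma(u_1\cdots u_j)$ avoid $H$ for $j<m$ and hit $h$ at $j=m$. Over $K$ as an alphabet this is an ``avoid $H$ until the end'' language whose syntactic complexity involves the whole of $G$, so there is no reason it should lie in $\SDH{H}(A^*)$; even placing it in $\SDH{G}(A^*)$ essentially requires the lemma you are proving. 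In the extremal case $H=\{1\}$ (unavoidable when $G$ is simple) the requirement becomes $K_H\in\SD(A^*)$, and at that point the induction gives back nothing beyond $\gamma^{-1}(1)$, so the system for $g\neq 1$ remains unsolved. A secondary issue: the inclusion $\SDH{H}(A^*)\subseteq\SDH{G}(A^*)$ is \prref{lem:toy} in the paper, whose proof \emph{uses} \prref{lem:generalstar}; for subgroups you can argue it directly, but you should be aware of the potential circularity.

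The paper avoids all of this by unrolling your recursion with one extra twist. Instead of ``first departure from $1$'', it uses ``last visit to the first value'': given $w=u_1\cdots u_k$, set $g_1=\gamma(u_1)$ and take the \emph{maximal} $i$ with $\gamma(u_1\cdots u_i)=g_1$, so that $u_1\cdots u_i\in (K\cap\gamma^{-1}(g_1))\cdot\gamma^{-1}(1)$ and the tail $w'=u_{i+1}\cdots u_k$ satisfies $|P(w')|<|P(w)|$, where $P(w)$ is the set of prefix values $\{\gamma(u_1\cdots u_j):1\le j\le k\}\subseteq G$. Inducting on $|P(w)|$ rather than on $|G|$ yields, after at most $|G|$ steps, each $\gamma^{-1}(g)$ as a finite union of products of the form $(K\cap\gamma^{-1}(g_1))\gamma^{-1}(1)\cdots(K\cap\gamma^{-1}(g_r))\gamma^{-1}(1)$ with pairwise distinct partial products $g_1,g_1g_2,\ldots$. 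No auxiliary codes, no subgroup induction, and no appeal to $\SDH{H}\subseteq\SDH{G}$ are needed.
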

\begin{proof}
For each $w \in K^*$ we construct a language $L(w) \in \SDH{G}(A^*)$ such that
\begin{itemize}
	\item $w \in L(w) \sse \gamma^{-1}(\gamma(w))$,
	\item $\abs{\set{L(w)}{w \in K^*}} < \infty$.
\end{itemize}

Consider $w= u_1 \cdots u_k \in \gamma^{-1}(g)$ with 
$u_i\in K$.   Define 
$P(w)= \set{\gam(u_1\cdots u_i)}{1\leq i \leq k} \sse G$
to be the set of prefixes of $w$ in $G$. We perform an induction on $\abs{P(w)}$. The case $\abs{P(w)} = 0$ implies $g=1$. Hence, we let $L(w) = \gamma^{-1}(1)$; and we have $\gamma^{-1}(1)\in \SDH{G}(A^*)$ by definition.
Hence, we may assume $\abs{P(w)} \geq 1$. Let $g_1= \gam(u_1)$ and 
choose $i$ maximal such that $g_1= \gam(u_1\cdots u_i)$. 
Then we have $u_1 \cdots u_i \in (K\cap \gamma^{-1}(g_1)) \cdot \gamma^{-1}(1)$. 
Define $w' = u_{i+1} \cdots u_k$. By maximality of $i$ we have 
$\abs{\set{\gam(u_1\cdots u_j)}{i < j \leq k}}< \abs{P(w)}$ because $P(w') = g_1^{-1}\cdot\set{\gam(u_1\cdots u_j)}{i < j \leq k}$. 
By induction there exists $L(w')$ (and only a finite number of them); and we let $L(w) = (K\cap \gamma^{-1}(g_1)) \cdot \gamma^{-1}(1) \cdot L(w')$.  
The result follows because we can write $\gamma^{-1}(g) = \bigcup\set{L(w)}{w \in \gamma^{-1}(g)}$ and this is a finite union.
\end{proof}

Clearly, we have for all $G$: if $K\in \SDH{G}(A^*)$ is a prefix code of bounded synchronization delay, then $K^*$ and 
$K^\oo$ are both in $\SDH{G}(A^\infty)$. 
As a special case, using the prefix code $K = \emptyset$, it holds $K^* = \os{1} \in \SDH{G}(A^\infty)$. 
More generally, every finite language is in $\SDH{G}(A^\infty)$. 
Note also that for $G' \leq G$ we have 
$\SDH{G'}(A^\infty) \sse \SDH{G}(A^\infty)$. In particular, 
$\bigcup{\set{\SDH{G_i}(A^\infty)}{i \in I}}\sse \SDH{\prod_{i\in I}G_i}(A^\infty)$ for every finite index set $I$. 
This inclusion holds for every divisor of $G$ as observed by the next lemma.
\begin{lemma}\label{lem:toy}
	$\SDH{H}(A^\infty) \sse \SDH{G}(A^\infty)$ holds for $H \preceq G$. 
\end{lemma}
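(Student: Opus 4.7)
Since $H \preceq G$ and both are finite groups, a subsemigroup of $G$ is automatically a subgroup, so there exist a subgroup $G' \leq G$ and a surjective homomorphism $\pi: G' \to H$. The already-observed subgroup inclusion gives $\SDH{G'}(A^\infty) \subseteq \SDH{G}(A^\infty)$, so it suffices to prove $\SDH{H}(A^\infty) \subseteq \SDH{G'}(A^\infty)$. I would do this by induction on the structure of the inductive definition of $\SDH{H}(A^\infty)$. The cases $\emptyset$, $\{a\}$, union, and concatenation are immediate from the inductive hypothesis, so the only real work is in the generalized star/$\omega$-power step.

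\textbf{The star step: the lift.} Suppose $K \subseteq A^+$ is a prefix code of bounded synchronization delay and $\gamma_K : K \to H$ is a map with each $\gamma_K^{-1}(h) \in \SDH{H}(A^\infty)$; call $\gamma : K^* \to H$ the extension. Fix a section of $\pi$, that is, choose $\tilde{h} \in G'$ with $\pi(\tilde{h}) = h$ for every $h \in H$, taking $\tilde{1_H} = 1_{G'}$. Define $\gamma'_K : K \to G'$ by $\gamma'_K(w) = \widetilde{\gamma_K(w)}$ and let $\gamma' : K^* \to G'$ be its multiplicative extension; then $\pi \circ \gamma' = \gamma$. For every $g' \in G'$, the fiber $\gamma'^{-1}_K(g') = \gamma'^{-1}(g') \cap K$ equals either $\gamma_K^{-1}(\pi(g'))$ or $\emptyset$, and is therefore in $\SDH{H}(A^*) \subseteq \SDH{G'}(A^*)$ by the induction hypothesis. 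So the hypotheses of \prref{lem:generalstar} are met for $\gamma'$, yielding $\gamma'^{-1}(g') \in \SDH{G'}(A^*)$ for every $g' \in G'$. Since $\gamma(w) = 1_H \iff \gamma'(w) \in \ker\pi$, this gives
\[
\gamma^{-1}(1_H) \;=\; \bigcup_{g' \in \ker\pi} \gamma'^{-1}(g'),
\]
a finite union in $\SDH{G'}(A^*)$, handling the finite-word case.

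\textbf{The $\omega$-case.} For infinite words I claim
\[
\gamma^{-1}(1_H)^\omega \;=\; \bigcup_{g^* \in \ker\pi} \gamma'^{-1}(g^*) \cdot \bigl(\gamma'^{-1}(1_{G'})\bigr)^\omega.
\]
For the $\subseteq$-direction, take $\alpha \in \gamma^{-1}(1_H)^\omega$ with unique $K$-factorization $\alpha = v_1 v_2 \cdots$, and set $g_n = \gamma'(v_1 \cdots v_n) \in G'$. Because $\alpha$ admits a grouping into finite blocks each of $\gamma$-value $1_H$, the set $\{n : \pi(g_n) = 1_H\} = \{n : g_n \in \ker\pi\}$ is infinite; by pigeonhole on the finite set $\ker\pi$, some $g^* \in \ker\pi$ satisfies $g_{n_i} = g^*$ along an infinite sequence $n_1 < n_2 < \cdots$, giving the required factorization. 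The reverse inclusion is an immediate computation using $\pi \circ \gamma' = \gamma$. Each summand on the right is a concatenation of $\gamma'^{-1}(g^*) \in \SDH{G'}(A^*)$ with $\gamma'^{-1}(1_{G'})^\omega \in \SDH{G'}(A^\omega)$, and the latter is directly produced by the $\omega$-clause of the definition of $\SDH{G'}$ applied to $K$ with map $\gamma'_K$. Hence the right-hand side lies in $\SDH{G'}(A^\infty)$.

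\textbf{Main obstacle.} The base cases and the finite-word part are essentially bookkeeping plus one appeal to \prref{lem:generalstar}. The genuine difficulty is the $\omega$-case: the definition of $\SDH{G'}$ only yields $\gamma'^{-1}(1_{G'})^\omega$ for a homomorphism into $G'$, not $\gamma^{-1}(1_H)^\omega$ for a different group $H$. Translating "$\pi(g_n) = 1_H$ infinitely often" into a form expressible with the restricted $\SD$-toolkit (union, concatenation, and $\omega$-power only at $1$) is exactly what the pigeonhole-based decomposition achieves, and is the step I expect to require the most care.
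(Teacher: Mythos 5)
Your proof is correct and follows essentially the same route as the paper's: lift $\gamma$ through a set-theoretic section of $\pi$ to a homomorphism into (a subgroup of) $G$, invoke \prref{lem:generalstar} to get all fibers of the lift, and then write $\gamma^{-1}(1)$ and $\gamma^{-1}(1)^\omega$ as finite unions over $\ker\pi$ — the paper states exactly these two identities with its lift $\psi$. Your treatment is in fact slightly more careful in two spots the paper glosses over (passing through the subgroup $G'$ rather than assuming a surjection $G\to H$ outright, and the pigeonhole justification of the $\subseteq$-inclusion in the $\omega$-identity), but these are refinements of the same argument, not a different one.
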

\begin{proof}
	Inductively, it suffices to prove that $\gamma^{-1}(1), \gamma^{-1}(1)^\omega \in \SDH{G}(A^\infty)$ for a prefix code $K\sse A^+$ of bounded synchronization delay and $\gamma : K^* \to H$ a \homo of the free monoid $K^*$ 
	to the group $H$ such that $K\cap \gamma^{-1}(h) \in \SDH{G}(A^\infty)$ for all $h\in H$. Without loss of generality we may assume that there exists a surjective homomorphism $\pi : G \to H$. 
	Let $g_h \in G$ be elements such that $\pi(g_h) = h$.
	Let $\psi : K^* \to G$ be the homomorphism such that $\psi(u) = g_{\gamma(u)}$ for $u\in K$. By definition it holds $\gamma = \pi \circ \psi$. 
	Now $K \cap \psi^{-1}(g_h) = K \cap \gamma^{-1}(h) \in \SDH{G}(A^\infty)$ and $K \cap \psi^{-1}(g) = \emptyset$ if $g\neq g_h$ for all $h\in H$. Thus, $\psi^{-1}(1), \psi^{-1}(1)^\omega \in \SDH{G}(A^\infty)$ and by \reflem{lem:generalstar} we also have $\psi^{-1}(g) \in \SDH{G}(A^\infty)$ for all $g\in G$.
	Note that
	\begin{align*}
	\gamma^{-1}(1) &= \bigcup_{\pi(g)=1} \psi^{-1}(g) \quad\text{ and}\\
	\gamma^{-1}(1)^\omega &= \bigcup_{\pi(g)=1} \psi^{-1}(g) \psi^{-1}(1)^\omega
	\end{align*}
	which proves that $\gamma^{-1}(1), \gamma^{-1}(1)^\omega \in \SDH{G}(A^\infty)$.
\end{proof}

We will formulate our results on the language classes $\SDH{G}(A^\infty)$ to obtain finer results, however our main result then is formulated with the language class $$\SDG{\bfH}(A^\infty) = \bigcup{\set{\SDH{G}(A^\infty)}{G \in \bfH}}.$$

The main result is the following equality between $\SDG{\bfH}, \varietyHline$ and $\LocRees{\varietyH}$.
\begin{theorem}\label{thm:characterizationSD}
	Let $L\subseteq A^\infty$ be a regular language and $\varietyH$ a variety of finite groups. Then the following properties are equivalent:
	\begin{enumerate}
		\item $L \in \SDH{\bfH}(A^\infty)$.\label{item:thmmain:a}
		\item $L \in \varietyHline(A^\infty)$.\label{item:thmmain:b}
		\item $L \in \LocRees{\varietyH}(A^\infty)$.\label{item:thmmain:c1}
	\end{enumerate}
\end{theorem}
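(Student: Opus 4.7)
I would close the equivalence as a cycle $(1)\Rightarrow(2)\Rightarrow(3)\Rightarrow(1)$. The implication $(1)\Rightarrow(2)$ is the ``easy'' inclusion that Schützenberger already stated for finite words (\prref{prop:schuetzSDG} in the excerpt); it goes by induction on the defining rules for $\SDH{\bfH}(A^\infty)$. Atoms $\es$, $\{a\}$ are recognized by the trivial monoid, union and concatenation are absorbed by taking a direct product, and for the generalized Kleene-star one uses the fact that a prefix code of bounded synchronization delay forces unique synchronised factorisations after a bounded look-ahead, so any subgroup of the syntactic monoid of $\gamma^{-1}(1)$ or $\gamma^{-1}(1)^\oo$ is a divisor of a subgroup of $G\in\bfH$ times subgroups coming from the $\gamma_K^{-1}(g)$'s. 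The $\oo$-case reduces to the $*$-case via the Arnold-style syntactic monoid on $A^\infty$, exactly as in \cite{DiekertKufleitner14tocs}.

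For $(2)\Rightarrow(3)$ I would prove by induction on $|M|$ that every finite monoid $M$ all of whose subgroups lie in $\bfH$ already lies in $\LocRees{\bfH}$. If $M$ is a group, then $M\in\bfH\subseteq\LocRees{\bfH}$. Otherwise pick a non-unit $c\in M$ (which exists because $M$ is not a group); form the local divisor $M_c$, which is a proper divisor of $M$ by the local-divisor discussion after the definitions, and a proper submonoid $N\subsetneq M$ (for instance the one generated by the images of the letters distinct from a chosen generator mapping to $c$, in a presentation of $M$). Both $N$ and $M_c$ lie in $\ov\bfH$ and are smaller than $M$, so by induction they are in $\LocRees{\bfH}$. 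The core step is to exhibit a divisor map $M\preceq \LocRees{N,M_c}$ using the structure map $\rho_c(x)=cxc$: every $m\in M$ is either representable inside $N$, or, if it involves $c$, factors in a unique way as $n_1 \cdot c x c \cdot n_2$ which is sent to the triple $(n_1,cxc,n_2)$ in the Rees product, and the multiplication of triples is exactly designed to track $\rho_c$. Invoking \reflem{lem:Reesdivisors} absorbs the fact that $N,M_c$ need only be divisors, not submonoids, to push the construction through.

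For $(3)\Rightarrow(1)$ I would argue inductively along a construction of $M$ by local Rees products from groups in $\bfH$. Base case: a group $M=G\in\bfH$. Here a language $L\subseteq A^\infty$ recognised by $\gamma:A^*\to G$ is a finite union of sets $\gamma^{-1}(g)$ and of $\gamma^{-1}(g)\cdot\gamma^{-1}(1)^\oo$; since $A$ itself is a prefix code of synchronisation delay $0$ (from the paragraph before \prref{lem:generalstar}), all $\gamma^{-1}(g)$ lie in $\SDH{G}(A^\infty)$ by \reflem{lem:generalstar}, and closure under union, concatenation and $(\cdot)^\oo$ applied to $\gamma^{-1}(1)$ gives the $\oo$-part. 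Inductive case: suppose $L$ is recognised by $M=\LocRees{N,M_c}$ via $\phi:A^*\to M$. Split each word along occurrences of the non-unit $c$: the maximal $c$-free factors live in $\phi^{-1}(N)$, and consecutive $c$-bracketed factors are controlled by $M_c$ through $\rho_c$. The set of ``bracketed blocks'' $B^*cBc$ (with $B=\phi^{-1}(N)\cap A^*$, after a small syntactic fix to ensure prefix-freeness) is a prefix code of bounded synchronization delay — because two occurrences of $c$ suffice to resynchronize. Labelling these blocks by their $M_c$-image gives a mapping to a group in $\bfH$ (after passing to the subgroup(s) of $M_c$ actually used), and the inductive hypothesis applies to the labels because $N,M_c$ are smaller monoids in $\LocRees{\bfH}$.

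The main obstacle is the synchronisation step in $(3)\Rightarrow(1)$: one must genuinely exhibit the bracketed factorisation as a prefix code of bounded synchronization delay, and simultaneously move from ``an arbitrary monoid $M_c$ in $\LocRees{\bfH}$'' to a labelling by a single group in $\bfH$ so that the SD-over-a-group rule of Section~\ref{sec:sd} applies. The tool for the latter is \reflem{lem:generalstar} together with \reflem{lem:toy}, which let one replace a divisor labelling by a labelling in a bigger group from $\bfH$. Secondary technical work is needed in $(1)\Rightarrow(2)$ to track $\oo$-factorisations through the transitive closure $\approx_\varphi$, but this is again controlled by the bounded synchronization delay, so it is routine given the local-divisor setup.
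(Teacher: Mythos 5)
Your steps (1)$\Rightarrow$(2) and (2)$\Rightarrow$(3) essentially coincide with the paper's: the first is \refsec{sec:sdclosure} (an induction on the defining rules, with \refprop{prop:schuetzSDG} for the star and a Birget--Rhodes/arrow-language argument, rather than your Arnold-style reduction, for the $\oo$-power), and the second is \refprop{prop:newlocrees} (note your claimed \emph{unique} factorization of $m$ as $n_1\cdot cxc\cdot n_2$ is neither true in general nor needed; surjectivity of $(u,x,v)\mapsto uxv$ onto $N\cup NM_cN$ suffices). The genuine gap is in your (3)$\Rightarrow$(1). First, you conflate the monoid element $c$ used to form the local divisor $M_c$ with a letter of the alphabet: a homomorphism $\phi: A^*\to\LocRees{N,M_c}$ comes with no distinguished letter mapping to $c$, so ``split each word along occurrences of the non-unit $c$'' and the code you write as $B^*cBc$ are not defined in this setting. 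Second, and more seriously, even after repairing the bracketing your blocks would be labelled by elements of the monoid $M_c$, which is not a group, whereas the star rule of \refsec{sec:sd} only admits labellings by a group. You invoke \reflem{lem:generalstar} and \reflem{lem:toy} to bridge this, but those lemmas only extend or transport labellings that are already group-valued (from $K\cap\gamma^{-1}(g)$ to $\gamma^{-1}(g)$, and from a divisor $H$ of $G$ up to $G$); neither converts an $M_c$-valued labelling into a $G$-valued one. That conversion is precisely the hard content of the theorem, and it is missing.

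The paper sidesteps this by not proving (3)$\Rightarrow$(1) directly: it obtains (3)$\Rightarrow$(2) for free from \reflem{lem:closureRees} (every group in a Rees extension embeds into one of the two factors), and then proves (2)$\Rightarrow$(1) in \refsec{sec:ldproof} by a double induction on $(\abs{M},\abs{A})$. There one fixes a letter $c\in A$ with $\phi(c)$ a non-unit, sets $B=A\sm\os{c}$, and substitutes the blocks of the prefix code $B^*c$ by a fresh alphabet $T=\phi(B^*)$ carrying a homomorphism $\psi:T^*\to M_c$ into the strictly smaller local divisor; the induction is on the size of the recognizing monoid, not along a Rees-product construction, and it bottoms out in the case where $\phi(A^*)$ is a group --- which is the only place where \reflem{lem:generalstar} and \reflem{lem:toy} are legitimately applied. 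One must also separate the cases of finitely versus infinitely many occurrences of $c$ and use a Ramsey factorization for the $\oo$-part, none of which your sketch addresses. If you wish to keep your cyclic structure, replace your (3)$\Rightarrow$(1) by (3)$\Rightarrow$(2) via \reflem{lem:closureRees} followed by the \refsec{sec:ldproof} argument for (2)$\Rightarrow$(1).
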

\begin{corollary}
	$\SDG{\bfH}(A^\infty)$ is closed under complementation and intersection for every variety $\bfH$ of finite groups.
\end{corollary}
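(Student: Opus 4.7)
The plan is to reduce the corollary to the algebraic characterization \ref{item:thmmain:b} in \prref{thm:characterizationSD}, namely $\SDG{\bfH}(A^\infty) = \varietyHline(A^\infty)$, and then exploit the fact that $\varietyHline$ is a variety of finite monoids. The closure properties of the language class of a monoid variety are an entirely standard, purely algebraic matter, so no further combinatorics on $\SD$-expressions or on prefix codes is needed. This is exactly the advantage of having the equivalence with $\varietyHline(A^\infty)$ available.

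For intersection, suppose $L_1, L_2 \in \SDG{\bfH}(A^\infty)$. By the main theorem there exist monoids $M_1, M_2 \in \varietyHline$ and \homos $\phi_i: A^* \to M_i$ recognizing $L_i$. I claim that $L_1\cap L_2$ is recognized by $(\phi_1,\phi_2): A^* \to M_1 \times M_2$. Since $\varietyHline$ is a variety, $M_1 \times M_2 \in \varietyHline$, and one checks directly from the definitions in \prref{sec:prelim} that $u \sim_{(\phi_1,\phi_2)} v$ implies both $u \sim_{\phi_1} v$ and $u \sim_{\phi_2} v$, so saturation of each $L_i$ by $\approx_{\phi_i}$ transfers to saturation of $L_1\cap L_2$ by $\approx_{(\phi_1,\phi_2)}$. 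Hence $L_1\cap L_2 \in \varietyHline(A^\infty) = \SDG{\bfH}(A^\infty)$ again by the main theorem.

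For complementation, the observation is that the relation $\sim_\phi$, and thus its transitive closure $\approx_\phi$, is symmetric on both $A^*$ and $A^\oo$ directly from its definition. Consequently, $L\subseteq A^\infty$ is recognized by $\phi$ \IFF $L$ is a union of $\approx_\phi$-classes. This property is preserved under taking complements inside $A^\infty$, so the same homomorphism $\phi$ (into the same monoid in $\varietyHline$) recognizes $A^\infty \setminus L$, which therefore lies in $\varietyHline(A^\infty) = \SDG{\bfH}(A^\infty)$.

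The only potentially delicate point is the verification that $\approx_\phi$ is the right congruence to track, in particular that $L_1\cap L_2$ is really $\approx_{(\phi_1,\phi_2)}$-saturated on the infinite word part, since there $\sim_\phi$ is only symmetric and not transitive. But this is handled by the same block-refinement argument used implicitly in the definition of recognizability in \prref{sec:prelim}: any factorization witnessing $u \sim_{(\phi_1,\phi_2)} v$ is simultaneously a factorization witnessing $u \sim_{\phi_1} v$ and $u \sim_{\phi_2} v$, hence chains for $\approx_{(\phi_1,\phi_2)}$ project to chains for each $\approx_{\phi_i}$. So there is no real obstacle; the corollary is a soft consequence of \prref{thm:characterizationSD} together with the fact that $\varietyHline$ is a variety.
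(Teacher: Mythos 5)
Your proof is correct and follows essentially the same route as the paper: the paper's proof is a one-liner invoking \prref{thm:characterizationSD} to identify $\SDG{\bfH}(A^\infty)$ with $\varietyHline(A^\infty)$ and then citing the standard closure of the latter under complementation and intersection. You merely spell out the direct-product and $\approx_\phi$-saturation details that the paper takes for granted, so there is no substantive difference.
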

\begin{proof}
	By \refthm{thm:characterizationSD} we have $\SDH{\bfH}(A^\infty) = \varietyHline(A^\infty)$ and $\varietyHline(A^\infty)$ is closed under complementation and intersection.
\end{proof}
The proof of \refthm{thm:characterizationSD} 
covers the next three sections.

\section{Closure properties of $\SDG{\bfH}$}\label{sec:sdclosure}
In this section we prove the direction \ref{item:thmmain:a} $\implies$ \ref{item:thmmain:b} of \refthm{thm:characterizationSD}. 
Therefore one has to study the closure properties under the operations given in the definition of $\SDG{\bfH}(A^\infty)$, that is, one has to show that those operations do not introduce new groups.

The following proposition of Sch\"utzenberger shows that the operation $\gamma^{-1}(1)$ does not introduce new groups. 
\begin{proposition}[\cite{Schutzenberger1974b}]\label{prop:schuetzSDG}
	Let $K\sse A^+$ be a prefix code of bounded synchronization delay and $\gamma_K : K \to G$ be a mapping such that $K_g= \oi \gamma_K(g)$ are  regular languages for $g \in G$. Let $\gamma : K^* \to G$ be the \homo from the free submonoid $K^*$ of $A^*$ 
	to the group $G$ such that $\gamma|_K = \gamma_K$. 
	View $\gamma^{-1}(1)$ as a subset of $A^*$. 
	Then, subgroups in the syntactic monoid of the language $\gamma^{-1}(1)$ are either divisors of  $G$ or  of the direct product $\prod_{g\in G} \Synt(K_g)$.
\end{proposition}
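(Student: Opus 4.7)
The plan is to analyze the syntactic monoid $M = \Synt(L)$ of $L = \gamma^{-1}(1)\sse A^*$ via its syntactic morphism $\eta: A^*\to M$, and to show that every subgroup $H\le M$ with identity $e$ embeds into $G \times \prod_{g\in G}\Synt(K_g)$, from which the dichotomy in the statement follows. If $H$ is trivial there is nothing to prove, so we may pick $w,u,v\in A^*$ with $\eta(w)=e$ and $uwv\in L\sse K^*$. Using the idempotency $e=e^N$ together with the bounded synchronization delay $d$ of $K$, we replace $w$ by a sufficiently high power; then every $K$-factorization of $uw^Nv$ is forced to pass through the interior of $w^N$, yielding a normalized representative $w=\alpha w^*\beta$ with $w^*\in K^*$, $\alpha,\beta\in A^*$ of bounded length, and $u\alpha,\beta v\in K^*$. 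The same argument applied to $uw^2v\in L$ gives $\beta\alpha\in K^*$, and equating $\gamma(uwv)=1=\gamma(uw^2v)$ yields the identity $\gamma(\beta\alpha)=\gamma(w^*)^{-1}$.

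For any $h\in H$ we take $\tilde w_h:=ww_hw$ as representative of $h=ehe$, so $\tilde w_h=\alpha w^*\beta w_h\alpha w^*\beta$. Whenever $u\tilde w_hv\in L\sse K^*$, the uniqueness of $K$-factorization in the prefix code $K$ lets us peel off the $K^*$-prefix $u\alpha w^*$ and the $K^*$-suffix $w^*\beta v$, isolating $\beta w_h\alpha\in K^*$. We then define
\begin{align*}
\delta(h)  &= \gamma(w^*)\cdot\gamma(\beta\tilde w_h\alpha)\in G,\\
\psi_g(h) &= \pi_g(\beta\tilde w_h\alpha)\in\Synt(K_g),
\end{align*}
where $\pi_g:A^*\to\Synt(K_g)$ is the syntactic morphism of $K_g$. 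Well-definedness in the choice of $w_h$ follows from $L$-equivalence transported through the fixed context $(u,v)$. For multiplicativity, using $\tilde w_{h_1h_2}:=\tilde w_{h_1}w\tilde w_{h_2}$ we obtain $\beta\tilde w_{h_1h_2}\alpha=(\beta\tilde w_{h_1}\alpha)\,w^*\,(\beta\tilde w_{h_2}\alpha)$; combining this with $\gamma(\beta\alpha)=\gamma(w^*)^{-1}$ shows that $\delta$ is a group homomorphism $H\to G$, and similarly each $\psi_g$ is multiplicative up to a constant.

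We conclude by showing that $\Phi=(\delta,(\psi_g)_{g\in G}):H\to G\times\prod_g\Synt(K_g)$ is injective. Given $h_1\ne h_2$ in $H$, some context $(p,q)\in A^*\times A^*$ distinguishes $\tilde w_{h_1}$ from $\tilde w_{h_2}$ with respect to $L$. A second application of synchronization delay---padding $(p,q)$ with further copies of $w$---lets us replace this context by a $K$-aligned one, so that membership of $p\tilde w_{h_i}q$ in $L$ depends only on the $\gamma$-value and on the $\Synt(K_g)$-classes of the inner word $\beta\tilde w_{h_i}\alpha$. Both quantities are recorded by $\Phi$, contradicting $\Phi(h_1)=\Phi(h_2)$. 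Hence $\Phi$ is injective and $H$ embeds into $G\times\prod_g\Synt(K_g)$; whichever of the two projections of $\Phi$ is already faithful on $H$ exhibits $H$ as a divisor of $G$ or of $\prod_g\Synt(K_g)$, respectively.

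The main obstacle is this last context-reduction step. A priori, distinguishing contexts lie in arbitrary $A^*\times A^*$ and need not respect any $K$-structure, whereas the maps $\delta$ and $\psi_g$ only see $K^*$-aligned information. Showing that every such context can, after suitable padding with powers of $w$, be replaced by a $K$-aligned one without losing the distinction is the second and more delicate use of bounded synchronization delay, and is where the prefix code condition enters in an essential way.
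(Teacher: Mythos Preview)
Your argument has a genuine logical gap at the very end: even if you succeed in embedding $H$ into $G\times\prod_g\Synt(K_g)$, it does \emph{not} follow that one of the two projections is faithful. A subgroup of a direct product need not divide either factor (think of $\Z/6\Z\hookrightarrow\Z/2\Z\times\Z/3\Z$). So your approach, at best, proves that $H$ divides the product $G\times\prod_g\Synt(K_g)$, which is strictly weaker than the stated dichotomy. The paper obtains the dichotomy by a case distinction that your argument lacks: it builds an explicit automaton for $\gamma^{-1}(1)$ on the state space $\wt Q = G\times(Q\setminus F)$ (where $Q$ is a product automaton for the $K_g$ with accepting set $F$), and then asks whether the action of $H$ on its invariant part $\wt Q_e$ ever passes through $F$, i.e.\ crosses a code-word boundary. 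If it does, a conjugation trick forces the $Q$-coordinate on $\wt Q_e$ to collapse to the initial state $q_0$, so $H$ acts only on the $G$-coordinate and embeds in $G$; if it never does, the $G$-coordinate is frozen and $H$ divides the transition monoid of $Q$, hence $\prod_g\Synt(K_g)$. This case split is precisely what yields ``either/or'' rather than ``divides the product''.

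There are also earlier gaps. Your map $\delta(h)=\gamma(w^*)\gamma(\beta\tilde w_h\alpha)$ requires $\beta\tilde w_h\alpha\in K^*$, since $\gamma$ is only defined on $K^*$; but you only argue this under the hypothesis $u\tilde w_hv\in L$, which need not hold for every $h\in H$ (the context $(u,v)$ was chosen for $e$, not for $h$, and $K^*$ is not in general a union of syntactic classes of $L$). And ``$\psi_g$ multiplicative up to a constant'' is not enough in a monoid that is not a group: the middle factor $\pi_g(w^*)$ cannot in general be cancelled, so $\psi_g$ is not a homomorphism and $\psi_g(H)$ need not be a group. Both issues disappear in the paper's automaton-based treatment, where well-definedness is automatic and the case split isolates exactly one of the two factors.
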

We will prove the same for $\gamma^{-1}(1)^\omega$, relying on \refprop{prop:schuetzSDG} as a blackbox result. The concept used for transfering the properties to infinite words are Birget-Rhodes expansions \cite{br84,br89}.
The Birget-Rhodes expansion of a monoid $M$ is the monoid $\BRExp{M} = \set{(X,m)}{1,m \in X\subseteq M}.$ 
The multiplication on $\BRExp{M}$ is given as a ``semi-direct product'': 
$(X,m)\cdot (Y,n) = (X\cup m\cdot Y, m\cdot n).$
Note that $M$ is isomorphic to the submonoid $\set{(M,m)}{m\in M}$ of $\BRExp{M}$, that is, $M$ is a divisor of $\BRExp{M}$.
Moreover, the following lemma shows that the Birget-Rhodes expansion has the same groups as $M$.
\begin{lemma}\label{lem:BRgroups}
	Every group contained in $\BRExp{M}$ is isomorphic to some group in $M$.
\end{lemma}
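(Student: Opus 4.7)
The plan is to exploit the projection $\pi: \BRExp{M} \to M$, $(X, m) \mapsto m$, which is a monoid homomorphism by the definition of multiplication. Given a subgroup $G$ of $\BRExp{M}$, I will show that $\pi$ restricts to an isomorphism from $G$ onto a subsemigroup of $M$ that happens to be a group.

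First I would unpack what it means for $G$ to have an identity element $(E, e)$. Idempotency of $(E, e)$ gives $e^2 = e$ in $M$ and $E \cup eE = E$, hence $eE \subseteq E$. The identity relation $(E, e)\cdot(X, m) = (X, m) = (X, m)\cdot(E, e)$ for every $(X, m) \in G$ then yields $em = me = m$ together with the inclusions $E \subseteq X$ and $mE \subseteq X$. So every first component appearing in $G$ contains $E$.

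Next I would use the inverses in $G$ to force equality. For $(X, m) \in G$ with inverse $(Y, n)$ (so $(X,m)(Y,n) = (E, e)$), the multiplication rule gives $X \cup mY = E$. Combined with the inclusion $E \subseteq X$ from the previous step, this forces $X = E$. So every element of $G$ has the form $(E, m)$, and consequently $\pi|_G$ is injective. Its image $H = \pi(G)$ is a subsemigroup of $M$ that inherits the group structure (its identity in $M$ is $e$), so $H$ is a group in $M$ in the sense of this paper, and $\pi|_G : G \to H$ is the desired isomorphism.

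There is no genuine obstacle here; the only subtle point is realizing that the two constraints $E \subseteq X$ (from the identity law) and $X \cup mY = E$ (from the inverse law) pin down $X = E$ uniformly for all $(X, m) \in G$. Once this is seen, the projection $\pi$ automatically supplies the isomorphism, and no further calculation beyond the definition of the multiplication in $\BRExp{M}$ is needed.
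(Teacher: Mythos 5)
Your proof is correct and follows essentially the same route as the paper: both arguments show that every element of the subgroup has the same first component as the identity, so that projection onto the second coordinate is an injective embedding into $M$. The only cosmetic difference is that you derive $X\subseteq E$ from the inverse relation $(X,m)(Y,n)=(E,e)$, whereas the paper derives the same inclusion by raising $(X,m)$ to the power $\abs{G}$; both exploit the fact that multiplication only enlarges the first component.
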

\begin{proof}
	Let $G\subseteq \BRExp{M}$ be a group contained in $\BRExp{M}$ and let $(X, e) \in G$ be the unit in $G$. For every element $(Y,m) \in G$ we have $(X,e)(Y,m) = (X\cup eY, em) = (Y,m)$ and thus $X\subseteq Y$. Furthermore, $(Y,m)^{\abs{G}} = (Y\cup \ldots, e) = (X,e)$ and we conclude $X=Y$. Thus, $(X,m) \mapsto m$ is an injective embedding of $G$ in $M$.
\end{proof}

The idea behind the Birget-Rhodes expansion is that it stores the seen prefixes in a set. More formally, the following lemma holds.
\begin{lemma}\label{lem:brexpprefix}
	Let $\varphi : A^* \to M$ be a homomorphism and $\psi : A^* \to \BRExp{M}$ be the homomorphism given by $\psi(a) = (\os{1,\varphi(a)},\varphi(a))$. Let $u\in A^*$ and $\psi(u) = (X,\varphi(u))$. For every $m\in X$ there exists a prefix $v$ of $u$ such that $\varphi(v) = m$.
\end{lemma}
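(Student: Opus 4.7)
The plan is a straightforward induction on the length of the word $u$, exploiting the semidirect-product flavor of the multiplication in $\BRExp{M}$.

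For the base case take $u = 1$. Then $\psi(1) = (\os{1}, 1)$ by definition, so $X = \os{1}$, and the only element $m \in X$ equals $\varphi(v)$ for the prefix $v = 1$.

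For the induction step, write $u = u'a$ with $u' \in A^*$ and $a \in A$, and let $\psi(u') = (X', \varphi(u'))$. Applying the multiplication rule $(X', \varphi(u')) \cdot (\os{1, \varphi(a)}, \varphi(a)) = (X' \cup \varphi(u') \cdot \os{1, \varphi(a)}, \varphi(u')\varphi(a))$, I get that
\[
X \;=\; X' \cup \os{\varphi(u'), \varphi(u'a)}.
\]
Now I case split on which piece contains $m$. If $m \in X'$, the induction hypothesis produces a prefix $v$ of $u'$ (hence of $u$) with $\varphi(v) = m$. If $m = \varphi(u')$, take $v = u'$. If $m = \varphi(u'a) = \varphi(u)$, take $v = u$. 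In every case, $v$ is a prefix of $u$ with $\varphi(v) = m$, completing the induction.

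No obstacles are expected; the only point worth checking carefully is the multiplication unfolding so that the prefixes accumulate correctly, which comes directly from the defining formula $(X,m)\cdot(Y,n) = (X \cup mY, mn)$ and the choice $\psi(a) = (\os{1, \varphi(a)}, \varphi(a))$.
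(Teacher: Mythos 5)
Your proof is correct and is essentially identical to the paper's: both proceed by induction on the length of $u$, unfolding the product $(X',\varphi(u'))\cdot(\os{1,\varphi(a)},\varphi(a))$ and noting that the only new elements of $X$ are $\varphi(u')$ and $\varphi(u)$, which are witnessed by the prefixes $u'$ and $u$. No issues.
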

\begin{proof}
	We will prove this inductively. The statement is true if $u$ is the empty word. 
	Thus, consider $u = va$ for some letter $a\in A$. Let $\psi(v) = (Y,\varphi(v))$, then \[\psi(u) = \psi(v) \cdot (\os{1,\varphi(a)},\varphi(a)) = (Y\cup \os{\varphi(v),\varphi(v)\varphi(a)}, \varphi(u)).\] 
	Inductively, we obtain prefixes of $v$, and therefore also prefixes of $u$, for all elements of $Y$. The only (potentially) new element in $X$ is $\varphi(u)$. This proves the claim.
\end{proof}
A special kind of $\omega$-regular languages are \emph{arrow languages}. 
Let $L \subseteq A^*$ be a language. We define 
$\overrightarrow{L} = \set{\alpha \in A^\omega}{\text{infinitely many prefixes of } \alpha \text{ are in } L}$
to be the arrow language of $L$. The set of arrow languages is exactly the set of deterministic languages \cite{tho90handbook}. 
The Birget-Rhodes expansion can be used to obtain a recognizing monoid for $\overrightarrow{L}$, given a monoid for $L$.
\begin{proposition}\label{prop:BRarrow}
	Let $L\subseteq A^*$ be some regular language and $\varphi : A^* \to M$ be a homomorphism which recognizes $L$, then $\overrightarrow{L}$ is recognized by $\BRExp{M}$.
\end{proposition}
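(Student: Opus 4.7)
The plan is to take as recognizing homomorphism $\psi : A^* \to \BRExp{M}$ the map defined on letters by $\psi(a) = (\{1,\varphi(a)\}, \varphi(a))$, and then to verify directly the recognizability condition from Section~\ref{sec:prelim} for $\overrightarrow{L}$. The central tool is \reflem{lem:brexpprefix}, which identifies the first coordinate of $\psi(u)$ as precisely the set $X_u = \{\varphi(w) : w \text{ is a prefix of } u\}$ of $\varphi$-values of prefixes of $u$; this is what encodes the ``history'' of $\varphi$ while reading $u$.

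Concretely, suppose $\alpha \in \overrightarrow{L}$ and $\alpha \sim_\psi \beta$, via factorizations $\alpha = u_1 u_2 \cdots$ and $\beta = v_1 v_2 \cdots$ into nonempty finite words with $\psi(u_i) = \psi(v_i)$ for every $i \geq 1$. Writing $\pi_n = u_1 \cdots u_n$ and $\tau_n = v_1 \cdots v_n$, equality of second coordinates gives $\varphi(\pi_n) = \varphi(\tau_n)$, while equality of first coordinates gives $X_{\pi_n} = X_{\tau_n}$ for all $n$. Since $\varphi$ recognizes $L$, one has $L = \varphi^{-1}(\varphi(L))$ with $\varphi(L) \sse M$ finite, so pigeonhole applied to the infinite set of prefixes of $\alpha$ that lie in $L$ yields a fixed $m \in \varphi(L)$ that is attained as $\varphi(p) = m$ by infinitely many prefixes $p$ of $\alpha$.

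For each such prefix $p$, let $n(p)$ be the unique index with $|\pi_{n(p)-1}| < |p| \leq |\pi_{n(p)}|$, so that $p = \pi_{n(p)-1} \, w$ for some nonempty prefix $w$ of $u_{n(p)}$. From $\psi(u_{n(p)}) = \psi(v_{n(p)})$ and \reflem{lem:brexpprefix}, the word $v_{n(p)}$ has a prefix $w'$ with $\varphi(w') = \varphi(w)$; setting $p' = \tau_{n(p)-1} \, w'$ gives a prefix of $\beta$ whose image is
$\varphi(p') = \varphi(\tau_{n(p)-1}) \, \varphi(w') = \varphi(\pi_{n(p)-1}) \, \varphi(w) = \varphi(p) = m \in \varphi(L)$, hence $p' \in L$.

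The main obstacle, which I would address last, is that this construction a priori might produce coinciding prefixes $p'$ for different choices of $p$, whereas we actually need \emph{infinitely many distinct} prefixes of $\beta$ in $L$ to conclude $\beta \in \overrightarrow{L}$. To rule out collapse, observe that since each $u_i$ (resp.~$v_i$) is nonempty, $|\pi_n| \to \infty$ (resp.~$|\tau_n| \to \infty$) as $n \to \infty$; as $|p|$ ranges over arbitrarily large values among the prefixes of $\alpha$ with image $m$, the index $n(p)$ is forced to infinity, and therefore $|p'| \geq |\tau_{n(p)-1}| \to \infty$ as well. Unboundedly many lengths force infinitely many pairwise distinct prefixes of $\beta$ in $L$, whence $\beta \in \overrightarrow{L}$, completing the proof that $\psi$, and thus $\BRExp{M}$, recognizes $\overrightarrow{L}$.
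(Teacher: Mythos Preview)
Your proof is correct and follows essentially the same approach as the paper: the same homomorphism $\psi$, the same use of \reflem{lem:brexpprefix} to transfer a prefix of $u_{n}$ to a prefix of $v_{n}$ with the same $\varphi$-value, and the same use of $\varphi(\pi_{n-1}) = \varphi(\tau_{n-1})$ to conclude. Two minor remarks: the pigeonhole step fixing a single $m \in \varphi(L)$ is unnecessary (the argument goes through for each prefix $p$ of $\alpha$ lying in $L$ individually, since $\varphi(p') = \varphi(p)$ already forces $p' \in L$), and your claim that \reflem{lem:brexpprefix} identifies $X_u$ \emph{exactly} as the set of prefix values slightly overstates the lemma, which only gives one inclusion; the other inclusion (that $\varphi(w) \in X_{u_{n(p)}}$ for any prefix $w$ of $u_{n(p)}$) is immediate from the multiplication in $\BRExp{M}$ but should be noted.
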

\begin{proof}
	Let $\psi : A^* \to \BRExp{M}$ be the homomorphism given by $\psi(a) = (\os{1,\varphi(a)},\varphi(a))$. 
	Let $\alpha \in \overrightarrow{L}$ and $\alpha \sim_\psi \beta$. We show that $\beta \in \overrightarrow{L}$. 
	Let $\alpha = u_1u_2\cdots$ and $\beta = v_1v_2\cdots$ be factorizations such that $\psi(u_i) = \psi(v_i)$. 
	Since $\alpha \in \overrightarrow{L}$, we may assume that for every $i$ there exists a decomposition $u_i = u_i' u_i''$ such that $u_1\cdots u_{i-1}u_i' \in L$. 
	By $\psi(u_i) = \psi(v_i)$ and \reflem{lem:brexpprefix}, there exists a decomposition $v_i = v_i'v_i''$ such that $\varphi(u_i') = \varphi(v_i')$. Thus, $u_1\cdots u_{i-1}u_i' \sim_\varphi v_1\cdots v_{i-1}v_i'$ and therefore $v_1\cdots v_{i-1}v_i' \in L$. This implies $\beta \in \overrightarrow{L}$. 
\end{proof}

We are now ready to show the main result of this section, that is, every language in $\SDH{G}(A^\infty)$ has only groups which are divisors of direct products of $G$. In particular, this implies $\SDG{\bfH}(A^\infty) \subseteq \varietyHline(A^\infty)$.
\begin{proposition}
	If $L \in \SDH{G}(A^\infty)$, then all subgroups in $\Synt(L)$ are a divisor of a direct product of copies of $G$.
\end{proposition}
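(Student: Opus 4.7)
My plan is a structural induction on the inductive definition of $\SDH{G}(A^\infty)$. The base cases $\emptyset$ and $\{a\}$ have trivial syntactic monoid. For union and concatenation, the syntactic monoid of the result divides the direct product of the syntactic monoids of the operands, so the desired property of ``all subgroups are divisors of a direct product of copies of $G$'' is preserved. For the first ``star'' case, i.e., the language $\gamma^{-1}(1) \subseteq A^*$, the claim is immediate from \refprop{prop:schuetzSDG}: the subgroups of $\Synt(\gamma^{-1}(1))$ are divisors of $G$ or of $\prod_{g\in G}\Synt(K_g)$, and by the inductive hypothesis each $\Synt(K_g)$ has only subgroups that are divisors of direct products of copies of $G$.

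The only genuinely new case is the $\omega$-power $\gamma^{-1}(1)^\omega$, which I would handle via the Birget-Rhodes machinery developed in this section. The key identity to establish is
\[ \gamma^{-1}(1)^\omega = \overrightarrow{\gamma^{-1}(1)}. \]
The inclusion from left to right is immediate, since any factorization $\alpha = u_1 u_2 \cdots$ with $u_i \in \gamma^{-1}(1)\setminus\{1\}$ yields infinitely many distinct prefixes $u_1 \cdots u_n \in \gamma^{-1}(1)$. For the converse I would use that $K$ is a prefix code: if $\alpha$ has infinitely many prefixes in $\gamma^{-1}(1)\subseteq K^*$, then any two nested such prefixes $v \subsetneq v'$ admit compatible $K$-factorizations, so the intermediate factor $w$ with $v'=vw$ lies in $K^*$ and satisfies $\gamma(w)=1$. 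Extracting these successive increments produces an infinite factorization of $\alpha$ by non-empty words of $\gamma^{-1}(1)$.

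With this identity in hand, the rest is a clean assembly. Setting $M = \Synt(\gamma^{-1}(1))$, the finite-word case already established above gives that every subgroup of $M$ is a divisor of a direct product of copies of $G$. \refprop{prop:BRarrow} then implies that $\BRExp{M}$ recognizes $\overrightarrow{\gamma^{-1}(1)} = \gamma^{-1}(1)^\omega$, and \reflem{lem:BRgroups} guarantees that every subgroup of $\BRExp{M}$ embeds into $M$. Since $\Synt(\gamma^{-1}(1)^\omega)$ divides $\BRExp{M}$, its subgroups are themselves divisors of direct products of copies of $G$, which closes the induction.

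The main obstacle I anticipate is the prefix-code argument identifying $\gamma^{-1}(1)^\omega$ with the arrow language $\overrightarrow{\gamma^{-1}(1)}$; this step is where the hypothesis of bounded synchronization delay (or at least the prefix-code condition) is essential, whereas the surrounding steps are purely a transfer from finite to infinite words through the expansion $\BRExp{\cdot}$.
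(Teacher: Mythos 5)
Your overall strategy is the one the paper uses, and your treatment of the only genuinely new case, the $\omega$-power, matches the paper's exactly: identify $\gamma^{-1}(1)^\omega$ with the arrow language $\overrightarrow{\gamma^{-1}(1)}$, apply \refprop{prop:BRarrow} to recognize it by $\BRExp{\Synt(\gamma^{-1}(1))}$, and invoke \reflem{lem:BRgroups} to see that no new groups appear. You even supply the verification of $\gamma^{-1}(1)^\omega = \overrightarrow{\gamma^{-1}(1)}$ via unique $K$-factorizations, which the paper asserts without proof; as you correctly note, only the prefix-code property is needed there, not bounded synchronization delay.

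There is, however, one step that fails as stated: your claim that for concatenation ``the syntactic monoid of the result divides the direct product of the syntactic monoids of the operands.'' This is false in general. Over $A=\{a,b\}$, both $L=A^*aA^*$ and $K=A^*bA^*$ have syntactic monoid the two-element monoid $U_1=\{1,0\}$, so $\Synt(L)\times\Synt(K)$ is a commutative idempotent monoid; any language it recognizes is closed under permuting letters. But $L\cdot K = A^*aA^*bA^*$ contains $ab$ and not $ba$, so $\Synt(L\cdot K)$ does not divide $\Synt(L)\times\Synt(K)$. (If the division claim were true, the dot-depth hierarchy inside the star-free languages would collapse.) The conclusion you need for this case is nevertheless true, but it requires a different recognizer: the paper passes to the Sch\"utzenberger product of the two syntactic homomorphisms, which recognizes $(L\cap A^*)\cdot K$ and introduces no new groups (see the appendix of the paper). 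Replacing your division claim for concatenation by that argument repairs the proof; everything else is sound and coincides with the paper's route.
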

\begin{proof}
	We will prove this inductively on the definition of $\SDH{G}(A^\infty)$.
	The cases $\es \in \SDH{G}(A^\infty)$ and $\smallset{a} \in
	\SDH{G}(A^\infty)$ for all letters $a \in A$ are straightforward, as they are recognized by aperiodic monoids. 
	Let $L, K$ be languages, such that their syntactic monoids contain only groups which are divisors of a direct product of $G$. 
	The language $L \cup K$ is recognized by the direct product of their syntactic monoids which implies the statement. 
	$(L\cap A^*)\cdot K$ is recognized by the Sch\"utzenberger product of their syntactic homomorphisms \cite[Proposition 11.7.10]{dr95}. The Sch\"utzenberger product does not introduce new groups \cite{sch65sf}\footnote{A proof of these two citations also can be found in the appendix.}.
		
	Let $K\sse A^+$ be a prefix code of bounded synchronization delay and $\gamma : K^* \to G$ be a \homo of the free monoid $K^*$ 
	to the group $G$ such that for all $g\in G$
	every subgroup of $\Synt(K\cap \gamma^{-1}(g))$ is a divisor of a direct product of copies of $G$.
	\refprop{prop:schuetzSDG} implies that every subgroup of $\Synt(\gamma^{-1}(1))$ is a divisor of a direct product of copies of $G$.  
	Note that $\gamma^{-1}(1)^\oo = \overrightarrow{\gamma^{-1}(1)}$ and therefore \refprop{prop:BRarrow} and \reflem{lem:BRgroups} imply that every subgroup of $\Synt(\gamma^{-1}(1)^\omega)$ is a divisor of a direct product of copies of $G$.
\end{proof}

\section{The inclusion $\ov{\bfH}(A^\infty)\subseteq \SDG{\bfH}(A^\infty)$}\label{sec:ldproof}
In this section we prove the direction \ref{item:thmmain:b} $\implies$ \ref{item:thmmain:a}. 
We prove that if every subgroup of $M$ is a divisor of $G$, then every language recognized by $M$ is contained in $\SDH{G}(A^\infty)$. This result is again finer than just the inequality $\ov{\bfH}(A^\infty)\subseteq \SDG{\bfH}(A^\infty)$. The proof works by induction on $\abs M$ and on the alphabet and decomposes every $\approx_\varphi$-class into several sets in $\SDH{G}(A^\infty)$.
\begin{proposition}
	Let $L \subseteq A^\infty$ be recognized by $\varphi : A^* \to M$ and let $G$ be a group such that every subgroup of $M$ is a divisor of $G$, then $L \in \SDH{G}(A^\infty)$. Moreover, $L$ can be written as finite union
	\begin{align*}
	L = L_0 \cup \bigcup_{i=1}^m L_i\cdot \gamma_i^{-1}(1)^\omega
	\end{align*}
	for $L_i \in \SDH{G}(A^*)$ and $\gamma_i : K_i^* \to G$ for prefix codes $K_i \in \SDH{G}(A^*)$ of bounded synchronization delay with $\gamma_i^{-1}(g) \cap K_i \in \SDH{G}(A^*)$ for all $g\in G$. All products in the expressions of $L_i$ are unambiguous.
\end{proposition}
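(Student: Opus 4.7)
The plan is to induct on the lexicographic pair $(|M|,|A|)$, with trivial base cases ($|A|=0$ or $|M|=1$). Since $L$ is recognized by $\varphi$, it is a finite union of $\approx_\varphi$-classes on $A^\infty$, so it suffices to show that every such class admits a decomposition of the form $L_0 \cup \bigcup L_i \cdot \gamma_i^{-1}(1)^\omega$; closure of $\SDH{G}(A^*)$ under finite union then reassembles $L$. The inductive step splits on whether every letter of $A$ maps to a unit of $M$.

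\emph{Case A: every $\varphi(a)$ is a unit.} Then $H := \varphi(A^*)$ is a subgroup of $M$ and hence divides $G$ by hypothesis. The alphabet $A$ is a prefix code of synchronization delay $0$, and $\gamma := \varphi : A^* \to H$ is a homomorphism. Each set $A \cap \gamma^{-1}(h)$ is a finite subset of $A$, hence trivially in $\SDH{H}(A^*)$, so \reflem{lem:generalstar} combined with \reflem{lem:toy} gives $\gamma^{-1}(h) \in \SDH{G}(A^*)$ for every $h$, handling the finite-word classes. For an infinite word $\alpha \in A^\omega$, the prefix-value sequence $\varphi(\alpha[1..n]) \in H$ takes only finitely many values, so some $h$ is attained infinitely often; splitting at those positions writes $\alpha = u\beta$ with $\varphi(u) = h$ and $\beta \in \gamma^{-1}(1)^\omega$. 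Conversely, any two words in $\gamma^{-1}(h)\cdot\gamma^{-1}(1)^\omega$ are $\sim_\varphi$-equivalent, so each $\approx_\varphi$-class on $A^\omega$ is a finite union of the sets $\gamma^{-1}(h)\cdot\gamma^{-1}(1)^\omega$ it intersects, matching the required form.

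\emph{Case B: some letter $c$ with $\varphi(c)$ not a unit.} Let $B := A \setminus \{c\}$ and consider the local divisor $M' := M_{\varphi(c)}$, which satisfies $|M'| < |M|$ and whose subgroups all divide $G$. Decompose $A^\infty = B^\infty \cup B^* c\,A^\infty$; the first summand is handled by the inductive hypothesis applied to $\varphi|_{B^*}$ at the smaller alphabet. For the second, $K := B^* c$ is a prefix code of synchronization delay $1$, and alphabet-induction gives $K \in \SDH{G}(A^*)$. Define $\psi : K^* \to M'$ by $\psi(uc) := c\,\varphi(u)\,c$: this is a homomorphism into the strictly smaller monoid $M'$ with finite image $\psi(K)$. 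Introduce a finite auxiliary alphabet $\Sigma$ in bijection with $\psi(K)$ together with the tautological homomorphism $\tau : \Sigma^* \to M'$. Since $|M'|<|M|$, the inductive hypothesis at $M'$ yields the claimed decomposition for every $\tau$-recognized language in $\SDH{G}(\Sigma^\infty)$. Finally, substitute each $\sigma \in \Sigma$ by $K \cap \psi^{-1}(\sigma) \in \SDH{G}(A^*)$ (itself a finite union of $\sim_{\varphi|_{B^*}}$-classes concatenated with $c$, available by alphabet-induction). Using $B^* c\,A^\infty = K^+ B^\infty \cup K^\omega$: on $K^\omega$ the substitution immediately yields the target form, while on $K^+ B^\infty$ one stratifies by the finitely many pairs $(\varphi(w),\varphi(v))$ for $w \in K^+, v \in B^\infty$ and combines $\psi$-class expressions with the $B^\infty$-expressions from above.

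The main technical obstacle is the substitution step in Case B: replacing each $\sigma \in \Sigma$ by the disjoint prefix codes $K \cap \psi^{-1}(\sigma)$ must preserve the prefix-code-of-bounded-synchronization-delay property and $\SDH{G}$-membership of the resulting expression. Preservation follows from the disjointness of the preimages and composition of synchronization bounds, together with closure of $\SDH{G}$ under the relevant inductive operations; the induced homomorphisms $\bar\gamma : \bar K^* \to G_i$ compose with $\psi$ to give the required $\gamma_i$ over $A$. A secondary subtlety is that a $\sim_\varphi$-class on $K^+ B^\infty$ is not simply the product of a $\psi$-class with a $\varphi|_{B^*}$-class, which necessitates the stratification by joint $\varphi$-values indicated above; this remains a finite union and is therefore compatible with the required $\SDH{G}$-form. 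Unambiguity of all products in the final expression is inherited automatically from the underlying prefix-code factorizations.
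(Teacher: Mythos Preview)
Your overall architecture matches the paper's: induction on $(|M|,|A|)$, Case A when $\varphi(A^*)$ is a group, Case B via the local divisor $M_{\varphi(c)}$ and a substitution through an auxiliary alphabet. Case A is fine, and your remarks about the substitution preserving bounded synchronization delay and $\SDH{G}$-membership point in the right direction.

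The real gap is in Case B for infinite words, precisely at the sentence ``on $K^\omega$ the substitution immediately yields the target form.'' What has to be shown is that the pulled-back language lies inside a single $\approx_\varphi$-class on $A^\omega$. This is \emph{not} automatic: from $\sigma(u)\sim_\tau\sigma(v)$ you only obtain block factorizations $u=u_1u_2\cdots$, $v=v_1v_2\cdots$ with $u_i,v_i\in K^+$ and $\psi(u_i)=\psi(v_i)$, i.e.\ $\varphi(cu_i)=\varphi(cv_i)$; from this you must deduce $u\approx_\varphi v$. The paper does it by an interleaving (``brick'') argument, rewriting $(cu_1c)\,u_2\,(cu_3c)\cdots$ into $(cv_1c)\,u_2\,(cv_3c)\cdots$, regrouping, and repeating. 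That argument needs a \emph{leading} $c$: the first block $u_1$ can only be swapped for $v_1$ via $\varphi(cu_1c)=\varphi(cv_1c)$, and without a $c$ on the left there is nothing to anchor the first swap. Your decomposition into $K^+B^\infty\cup K^\omega$ lacks this $c$; the paper instead first strips the $B^*$-prefix, reducing to words in $cA^\infty$, and then works with $c(B^*c)^\omega$ rather than $(B^*c)^\omega$.

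The same defect bites your $K^+B^\infty$ case. You propose to stratify by $\varphi(w)$ for $w\in K^+$, but the auxiliary-alphabet data only give you $\psi(w)=\varphi(cw)$, which does not determine $\varphi(w)$. Again the fix is to peel off one initial $B^*$-block so that what remains starts with $c$. In short, you have correctly identified the substitution as the crux, but the obstacle is not the preservation of the $\SDH{G}$-form (that part is routine); it is the containment in an $\approx_\varphi$-class, and for that the leading $c$ is essential.
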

\begin{proof}
	Let $\dbracket{w}_\varphi = \set{v\in A^\infty}{w \approx_\varphi v}$ be the equivalence class of $w$. Since $L$ is recognized by $\varphi$, it holds $L = \cup_{w\in L} \dbracket{w}_\varphi$.
	Our goal is to construct languages $L(w) \in \SDH{G}(A^\infty)$ such that 
	\begin{itemize}
		\item $w \in L(w) \subseteq \dbracket{w}_\varphi$.
		\item the number of such languages is bounded by some function in $\abs{A}$ and $\abs{M}$.
		\item every word in $L(w)$ starts with the same letter.
	\end{itemize} 
	In particular, we want to saturate $\dbracket{w}_\varphi$ by sets in $\SDH{G}(A^\infty)$.
	The construction of the set $L(w)$ is by induction on $(\abs{M},\abs{A})$ with lexicographic order.
	
	If $w=1$, then we set $L(w) = \os{1}$. This concludes the induction base $\abs A = 0$.
	Let us consider the case that $\varphi(A^*)$ is a group, that is, a divisor of $G$. 
	Consider the prefix code $K = A$ of synchronization delay $1$ and the homomorphism $\gamma = \varphi$. 
	Note that since $\os{a} \in \SDH{G}(A^\infty)$ and $\SDH{G}(A^\infty)$ is closed under union, every subset of $K$ is in $\SDH{G}(A^\infty)$. In particular, $K \cap \gamma^{-1}(g) \in \SDH{G}(A^\infty)$ for all $g\in \varphi(A^*)$. This shows $\gamma^{-1}(g) = \varphi^{-1}(g) \in \SDH{G}(A^*)$ for all $g\in \varphi(A^*)$ by \reflem{lem:generalstar} and \reflem{lem:toy}. 
	If $w = av \in aA^*$ for some $a \in A$, then set $L(w) = a\varphi^{-1}(\varphi(v))$. It is clear that $w \in L(w) \subseteq \dbracket{w}_\varphi$ and $L(w) \in \SDH{G}(A^\infty)$ by the above.
	If $w \in aA^\omega$, then we obtain $w \in a\varphi^{-1}(m)\varphi^{-1}(1)^\omega$ for some $m \in M$ by Ramsey's theorem. The idempotent in this decomposition must be $1$ since $\varphi(A^*)$ is a group. 
	Thus, we may set $L(w) = a\varphi^{-1}(m)\varphi^{-1}(1)^\omega$. Note that by the definition of $\sim_\varphi$, the inclusion $L(w) \subseteq \dbracket{w}_\varphi$ holds. 
	In particular, these cases include the induction base $\abs M = 1$.

	In the following we assume that $\varphi(A^*)$ is not a group and therefore there exists a letter $c \in A$ such that $\varphi(c)$ is not a unit. 
	Fix this letter $c\in A$ and set $B = A\setminus \os{c}$. If $w\in B^\infty$, the set $L(w)$ exists by induction. 
	Let $w = uv$ with $u \in B^*$ and $v \in cA^\infty$. By induction we obtain $L(u) \in \SDH{G}(B^\infty) \subseteq \SDH{G}(A^\infty)$ and it remains to show $L(v) \in \SDH{G}(A^\infty)$. Note that the product $L(w) = L(u)\cdot L(v)$ is unambiguous. 
	From now on we may assume $w \in cA^\infty$.  
	Let us first consider the case $w = uv$ with $u \in c(B^*c)^*$ and $v \in B^\infty$, i.e., there are only finitely many occurences of the letter $c$ in $w$. By induction, there exists $L(v) \in \SDH{G}(B^\infty) \subseteq \SDH{G}(A^\infty)$ and by setting $L(w) = L(u)\cdot L(v)$ it remains to construct $L(u)$.
	
	Consider the alphabet $T = \varphi(B^*) = \set{\varphi(u)}{u\in B^*}$. Let $M_c$ be the local divisor of $M$ at $\varphi(c)$. 
	Since $M_c$ is a divisor of $M$, every subgroup of $M_c$ is a divisor of $G$. 
	Consider the homomorphism $\psi : T^* \to M_c$ given by $\psi(\varphi(u)) = \varphi(cuc)$ and the substitution $\sigma : (B^*c)^\infty \to T^\infty$ with $\sigma(u_1c u_2c\ldots) = \varphi(u_1)\varphi(u_2)\cdots$.
	Note that 
	\begin{align*}
	\psi(\sigma(u_1cu_2c\ldots u_nc)) &= \psi(\varphi(u_1)\varphi(u_2) \cdots \varphi(u_n))
	= \varphi(cu_1c)\circ \varphi(cu_2c) \circ \cdots \circ \varphi(cu_nc) \\
	&= \varphi(cu_1cu_2c\ldots cu_nc)
	\end{align*}
	and thus $\varphi^{-1}(m) \cap c(B^*c)^* = c\sigma^{-1}(\psi^{-1}(m))$. 
	By induction on the monoid size, since $\abs{M_c} < \abs{M}$, there exists a language $L(\sigma(u')) \in \SDH{G}(T^\infty)$ for all $u' \in (B^*c)^*$. 
	We show $\sigma^{-1}(K) \in \SDH{G}(A^\infty)$ for all $K \in \SDH{G}(T^\infty)$ inductively on the definition of $\SDH{G}$. 
	Then we can set $L(u) = c\sigma^{-1}(L(\sigma(u')))$ for $u = cu'$ and have completed the case of finitely many $c$'s.

	For $K=\emptyset$, we obtain $\sigma^{-1}(K) = \emptyset \in \SDH{G}(A^\infty)$. Furthermore, \[\sigma^{-1}(t) = \bigcup_{v\in B^*,  t=\varphi(v)} L(v)c \in \SDH{G}(A^\infty).\] 
	Let $L,K \in \SDH{G}(T^\infty)$. 
	A basic result from set theory yields 
	$\sigma^{-1}(L\cup K) = \sigma^{-1}(L) \cup \sigma^{-1}(K)$.
	Let $\sigma(v) = w_1w_2$ for some $v\in (B^*c)^*$. 
	Since $B^*c$ is a prefix code, there exists a unique factorization $v = v_1v_2$ with $v_1,v_2 \in (B^*c)^*$ such that $\sigma(v_1) = w_1$ and $\sigma(v_2) = w_2$. 
	Thus, we conclude $\sigma^{-1}(K\cdot L) = \sigma^{-1}(K)\cdot \sigma^{-1}(L)$. 
	Let now $K \in \SDH{G}(T^\infty)$ be a prefix code of synchronization delay $d$. 
	We first show that $\sigma^{-1}(K)$ is a prefix code of bounded synchronization delay.  
	Let $u, uv \in \sigma^{-1}(K)$, then $\sigma(u), \sigma(uv) = \sigma(u)\sigma(v)\in K$ and therefore $\sigma(v)=1$. 
	This implies $v=1$ and $\sigma^{-1}(K)$ is a prefix code. 
	We prove that $\sigma^{-1}(K)$ has synchronization delay $d+1$. 
	The incrementation of the synchronization delay by one comes from the fact that $B^*c$ is not a suffix code, and thus we need another word in $B^*c$ to pose as a left marker. 
	Consider $uvw \in \sigma^{-1}(K)^*$ with $v\in \sigma^{-1}(K)^{d+1}$ and factorize $v = v_1cv_2$ with $v_2\in \sigma^{-1}(K)^d = \sigma^{-1}(K^d)$. 
	Then $\sigma(uvw) = \sigma(uv_1c)\sigma(v_2)\sigma(w)$, and by $\sigma(v_2) \in K^d$ this implies $\sigma(uv) = \sigma(uv_1c)\sigma(v_2) \in K^*$. Thus, $uv \in \sigma^{-1}(K)^*$. 
	Let $\gamma : K^* \to G$ be some homomorphism and $K_g = K\cap \gamma^{-1}(g) \in \SDH{G}(T^\infty)$ for all $g\in G$. Inductively, $\sigma^{-1}(K_g) \in \SDH{G}(A^\infty)$ and $\sigma^{-1}(K) = \bigcup \sigma^{-1}(K_g)$. Let $\gamma' : \sigma^{-1}(K)^* \to G$ be induced by $\gamma'(u) = \gamma(\sigma(u))$. By definition of $\SDH{G}(A^\infty)$ we obtain $\gamma'^{-1}(1) \in \SDH{G}(A^\infty)$. 
	However, 
	$u_1\cdots u_n \in \sigma^{-1}(\gamma^{-1}(1))$ if and only if $\gamma(\sigma(u_1\cdots u_n)) = 1$. Furthermore, note that $\gamma(\sigma(u_1\cdots u_n))  = \gamma(\sigma(u_1)) \cdots \gamma(\sigma(u_n)) = \gamma'(u_1) \cdots \gamma'(u_n) = \gamma'(u_1\cdots u_n)$. Thus, we obtain $\sigma^{-1}(\gamma^{-1}(1)) = \gamma'^{-1}(1) \in \SDH{G}(A^\infty)$ and $\sigma^{-1}(\gamma^{-1}(1)^\omega) = \gamma'^{-1}(1)^\omega \in \SDH{G}(A^\infty)$.

	The last case of the proof is that $w$ contains infinitely many $c$'s, that is, $w = cv$ with $v\in (B^*c)^\omega$. 
	By induction, we know that $\sigma(v) \in L_T\cdot \gamma_T^{-1}(1)^\omega \subseteq \dbracket{\sigma(v)}_\psi$ for some $L_T \in \SDH{G}(T^*)$ and $\gamma_T : K_T^* \to G$ for some prefix code $K_T \in \SDH{G}(T^*)$ of bounded synchronization delay with $\gamma_T^{-1}(g) \cap K_T \in \SDH{G}(T^*)$. 
	By the calculation above, there exists a $\gamma : K^* \to G$ with the usual properties such that $\gamma^{-1}(1) = \sigma^{-1}(\gamma_T^{-1}(1))$. Let $L = \sigma^{-1}(L_T)$ and set $L(w) = cL\gamma^{-1}(1)^\omega$. 
	It remains to show that $cL\gamma^{-1}(1)^\omega \subseteq \dbracket{w}_\varphi$. Let $cu \in cL\gamma^{-1}(1)^\omega$, then $\sigma(u) \in \dbracket{\sigma(v)}_\psi$, that is $\sigma(u) \approx_\psi \sigma(v)$. 
	Since $\approx_\psi$ is the transitive closure of $\sim_\psi$, we show that $\sigma(u) \sim_\psi \sigma(v)$ implies $cu \approx_\varphi cv$ for all $u,v \in (B*c)^\omega$ which concludes the proof. 
	Now, let $\sigma(u) = \sigma(u_1c)\sigma(u_2c)\cdots$ and $\sigma(v) = \sigma(v_1c)\sigma(v_2c)\cdots$ such that $\psi(\sigma(u_ic)) = \psi(\sigma(v_ic))$. As observed above, this implies $\varphi(cu_ic) = \varphi(cv_ic)$. Thus,
	\begin{align*}
	cu &= (cu_1c)u_2(cu_3c)u_4(c\cdots 
	\sim_\varphi (cv_1c)u_2(cv_3c)u_4(c\cdots \\
	&= cv_1(cu_2c)v_3(cu_4c)\cdots
	\sim_\varphi cv_1(cv_2c)v_3(cv_4c)\cdots \\
	&= cv.
	\end{align*}
	This implies the existence of sets $L(w) \in \SDH{G}(A^\infty)$ with $w \in L(w) \subseteq \dbracket{w}_\varphi$ in the case of infinitely many $c$'s.
\end{proof}

\section{Rees extension monoids}\label{sec:rees}
In this section we prove the direction \ref{item:thmmain:b} $\iff$ \ref{item:thmmain:c1}.
We need the fact that every group contained in $\Rees{N,M,\rho}$ is  contained in $N$ or in $M$. 
\begin{lemma}[\cite{AlmeidaK16}]\label{lem:closureRees}
	Let $G$ be a group in $\Rees{N,M,\rho}$, then there exists an embedding of $G$ into  $N$ or into  $M$.
\end{lemma}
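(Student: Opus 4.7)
The plan is a case analysis on where the identity $e$ of $G$ lies, exploiting the fact that $N$ is a submonoid of $\Rees{N,M,\rho}$ while the triple part $N\times M\times N$ behaves as an ideal: inspecting the three multiplication rules one sees that every product involving a triple is again a triple. Consequently, if $g\in G$ is a triple then $g g^{-1}$ is a triple, whereas if $g\in N$ then $g g^{-1}\in N$. Hence $G$ is either entirely contained in $N$ or entirely contained in $N\times M\times N$, depending on the type of~$e$.

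If $e\in N$, then all of $G$ lies in $N$ by the above, and since $N$ is a submonoid of $\Rees{N,M,\rho}$, $G$ is a subgroup of $N$, giving the desired embedding into $N$. Otherwise $e=(e_1,f,e_2)$ is a triple, and every $g\in G$ is a triple $(n_1(g),m(g),n_2(g))$. Applying the identity laws $e\cdot g=g$ and $g\cdot e=g$ and reading off the outer coordinates in the product rule forces $n_1(g)=e_1$ and $n_2(g)=e_2$ for every $g\in G$, so each $g$ has the form $(e_1,m(g),e_2)$ with $m(g)\in M$. Writing $r=\rho(e_2 e_1)\in M$, the product rule collapses to $m(gg')=m(g)\,r\,m(g')$. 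From $e^2=e$ one extracts the relation $frf=f$, and from $ge=g=eg$ one extracts $m(g)\,rf = m(g) = fr\,m(g)$ for every $g\in G$.

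Finally I would define $\varphi\colon G\to M$ by $\varphi(g)=m(g)\,r$. This is a homomorphism, since $\varphi(gg')=m(gg')\,r=m(g)\,r\,m(g')\,r=\varphi(g)\varphi(g')$. For injectivity, if $\varphi(g)=\varphi(g')$ then multiplying on the right by $f$ and using $m(g)\,rf=m(g)$ gives $m(g)=m(g')$, hence $g=g'$. Thus $\varphi$ embeds $G$ into $M$, completing the second case. The only genuine subtlety is the coordinate-rigidity step in Case~2 and the idea of twisting by $r=\rho(e_2 e_1)$ so as to absorb the sandwich factor that $\rho$ inserts between triples; once this twist is written down, injectivity and the homomorphism property are mechanical consequences of the identity and idempotence axioms for $e$.
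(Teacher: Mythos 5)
Your proof is correct. The paper does not contain its own proof of this lemma---it is imported from Almeida and Kl{\'i}ma \cite{AlmeidaK16}---so there is nothing internal to compare against; your argument (the triple part $N\times M\times N$ is an ideal, so $G$ lies entirely in $N$ or entirely in the ideal according to where its identity $e=(e_1,f,e_2)$ sits, then coordinate rigidity forces $g=(e_1,m(g),e_2)$ and the twist $\varphi(g)=m(g)\rho(e_2e_1)$ gives an injective homomorphism into $M$ using $frf=f$ and $m(g)rf=m(g)$) is the standard one and every step checks out.
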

Thus, \reflem{lem:closureRees} implies 
$\LocRees{\bfH} \subseteq \Rees{\bfH} \subseteq \Rees{\ov \bfH} \subseteq \ov{\bfH}$
for any group variety $\bfH$, which is \ref{item:thmmain:c1} $\implies$ \ref{item:thmmain:b}.
We want to prove equality, that is, every monoid which contains only groups in $\bfH$ is a divisor of an iterated Rees extension of groups in $\bfH$. However, we are able to prove a stronger statement using only local Rees extensions.
\begin{proposition}\label{prop:newlocrees}
Given $M$, we can construct a sequence 
of monoids $M_1, \ldots M_k = M$ with $k \leq 2^{\abs{M}}-1$ such that 
for each $1\leq j \leq k$ we have for 
$M_j$ one of the following:
\begin{itemize}
\item $M_j$ is a group which is a divisor of $M$. 
\item $M_j$ is a divisor of a local Rees product of some $M_i$ and a local divisor $M_\ell$ of ${M_j}$ with $i, \ell <j$.
\end{itemize}
\end{proposition}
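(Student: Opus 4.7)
The plan is induction on $|M|$. When $M$ is a group (in particular, when $|M|=1$), take $k=1$ and $M_1=M$, a group divisor of itself. Otherwise $M$ has a non-unit, and I would fix a \emph{minimal} generating set $A$ of $M$. Since the units of $M$ form a subgroup, they cannot generate all of $M$, so some $c\in A$ is a non-unit. Set $N=\langle A\setminus\{c\}\rangle$; minimality forces $N\subsetneq M$, while we still have $M=\langle N\cup\{c\}\rangle$, and $|M_c|<|M|$ from the local divisor discussion in \prref{sec:prelim}. Hence $|N|,|M_c|\leq|M|-1$.

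The crux is a key lemma: $M$ is a divisor of the local Rees product $\LocRees{N,M_c}=\Rees{N,M_c,\rho_c}$, where $\rho_c(n)=cnc$. My plan is to exhibit a surjective homomorphism $\pi:\LocRees{N,M_c}\to M$ given by $\pi(n)=n$ on $N$ and $\pi(n_1,x,n_2)=n_1\,x\,n_2$ (product in $M$) on triples. Surjectivity is immediate from $M=\langle N\cup\{c\}\rangle$: every $m\in M\setminus N$ factors as $m=n_0cn_1c\cdots cn_k$ with $n_i\in N$ and $k\geq 1$, so $m=\pi(n_0,\,cn_1c\cdots cn_{k-1}c,\,n_k)$ with the middle entry lying in $cM\cap Mc=M_c$. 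The homomorphism check is the only real computation: for triples, one needs $x\circ\rho_c(n_2n_1')\circ x'=x\cdot n_2n_1'\cdot x'$ in $M$, which follows by writing $x=uc$, $x'=cv$ and unwinding the local divisor rule $(mc)\circ(cn)=mcn$ to arrive at $ucn_2n_1'cv$ on both sides.

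Armed with the key lemma, the induction packages up cleanly. By the inductive hypothesis I obtain sequences for $N$ and for $M_c$ of lengths at most $2^{|N|}-1$ and $2^{|M_c|}-1$, respectively. Concatenating them and appending $M$ yields the desired sequence for $M$: group entries remain group divisors of $M$ by transitivity of $\preceq$; inner non-group entries retain their local Rees decompositions, now read inside the larger sequence with shifted indices; and the appended $M$ divides $\LocRees{N,M_c}=\LocRees{M_{k_N},M_{k_N+k_c}}$, where $M_{k_N}=N$ is a proper submonoid of $M$ and $M_{k_N+k_c}=M_c$ is a local divisor of $M$, both appearing earlier in the sequence. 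The length bound is $k_N+k_c+1\leq(2^{|N|}-1)+(2^{|M_c|}-1)+1\leq 2\cdot 2^{|M|-1}-1=2^{|M|}-1$.

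The principal obstacle will be the identity $x\circ\rho_c(n)\circ x'=xnx'$ in $M$: it demands carefully reconciling two different multiplications — the Rees product on $\LocRees{N,M_c}$ (which uses $\rho_c$) and the local divisor operation on $M_c$ (which absorbs one factor of $c$). Once this identity is established, selecting $c$ via minimal generation, surjectivity of $\pi$, the recursive assembly, and the singly-exponential bound are all routine bookkeeping.
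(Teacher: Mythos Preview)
Your proposal is correct and follows essentially the same approach as the paper: the same induction on $|M|$, the same choice of a non-unit $c$ from a minimal generating set with $N=\langle A\setminus\{c\}\rangle$, the same surjective homomorphism $\LocRees{N,M_c}\to M$ sending $(n_1,x,n_2)\mapsto n_1xn_2$, and the same concatenation of the inductive sequences with the identical bound $2(2^{|M|-1}-1)+1=2^{|M|}-1$. Your homomorphism verification via $x=uc$, $x'=cv$ is exactly the unwinding the paper performs as $x\circ cvsc\circ y = xvsy$.
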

\begin{proof}
	We proof the statement with induction on $\abs{M}$. If $M$ is a group, we set $M_1 = M$. This includes the base case $\abs{M} = 1$. 
	If $M$ is not a group, we may choose a minimal generating set of $M$. Let $c$ be a nonunit of this generating set, then there exists a proper submonoid $N$ of $M$ such that $N$ and $c$ generate $M$.
	Since $c$ is not a unit, the local divisor $M_c$ is smaller than $M$, that is, $\abs{M_c} < \abs{M}$. 
	By induction, there exist sequences $M_1',\ldots, M_{k'}' = N$ and $M_1'',\ldots, M_{k''}'' = M_c$ with $k',k'' \leq 2^{\abs{M}-1}-1$.
	We show that $M$ is a homomorphic image of the local Rees product $\LocRees{N,M_c}$. Let $\varphi : \LocRees{N,M_c} \to M$ be the mapping given by $\varphi(n) = n$ for $n\in N$ and $\varphi(u,x,v)=uxv$ for $(u,x,v)\in N \times M_c \times N$. 
	Since
	\begin{align*}
	\varphi((u,x,v)(s,y,t)) &= \varphi(u,x\circ cvsc \circ y, t) = \varphi(u,xvsy,t) \\&= (uxv)(syt) = \varphi(u,x,v)\varphi(s,y,t),
	\end{align*}
	$\varphi$ is a homomorphism. Obviously, $M = N \cup NM_{c}N$ and thus $\varphi$ is surjective. 
	
	Setting $M_i = M_i'$ for $1\leq i \leq k'$, $M_{i+k'} = M_i''$ for $1 \leq i \leq k''$ and $M_{k'+k''+1} = M$ leads to such a sequence for $M$ as $M$ is a divisor of the local Rees product of $M_{k'}=N$ and $M_{k'+k''} = M_c$. Since $k'+k''+1 \leq  2\cdot (2^{\abs{M}-1}-1)+1 = 2^{\abs{M}}-1$, the bound on $k$ holds.	
\end{proof}
The inclusion $\ov{\bfH} \subseteq \LocRees{\bfH}$ is immediate from \refprop{prop:newlocrees}, which is \ref{item:thmmain:b} $\implies$ \ref{item:thmmain:c1}.
In particular, every monoid in $\ov{\bfH}$ is a divisor of an iterated Rees product of groups in $\bfH$ by \reflem{lem:Reesdivisors}.
We can draw the decomposition as a tree based on the decomposition of $M$ in submonoids and local divisors. We do not describe this formally but content ourselves to give an example.
\begin{example}\label{ex:monoid}
	Let $M$ be the monoid generated by $\os{a,b,\delta,\sigma}$ with the relations $a^2 = b^2 = ab = ba = 0$, $a\delta = a$, $\delta \sigma = \sigma \delta^2$, $\delta^3 = 1$, $\sigma^2 = 1$ and $d\delta = \delta d$, $d\sigma = \sigma d$ with $d\in \os{a,b}$. The subgroup generated by $\delta$ and $\sigma$ is the symmetric group $\mathfrak{S}_3$; it is solvable but not Abelian. The monoid $M$ is syntactic for the language $L$ which is a union of $L_a$ and $L_b$. 
The language $L_a$ is the set of all words $uav$  with $uv \in \os{\delta, \sigma}^*$ and the sign of the permutation $uv$ evaluates to $-1$.
The language $L_b$ is the set of all words $ubv$  with $uv \in \os{\delta, \sigma}^*$ and $uv$ evaluates in $\mathfrak{S}_3$ to $\delta$. The decomposition in Rees products from \refprop{prop:newlocrees} is depicted in \reffig{fig:tree}. Here $M[a,\sigma,\delta]$ denotes the submonoid generated by $\os{a,\sigma,\delta}$. In particular, this yields $M \preceq \Rees{\Rees{S_3,\mathbb Z/2\mathbb Z,\rho_1},\Rees{S_3,\os{1},\rho_2},\rho_3}$ for some $\rho_1,\rho_2,\rho_3$ by \reflem{lem:Reesdivisors}.
		\begin{figure}
			\begin{tikzpicture}[level distance=1.5cm,
			level 1/.style={sibling distance=7cm},
			level 2/.style={sibling distance=3.5cm}]
			\node {$M$}
			child {node {$M[a,\sigma,\delta]$}
				child {node {$\mathfrak{S}_3$}}
				child {node {$M[a,\sigma,\delta]_a \simeq \mathbb Z/2\mathbb Z$}}
			}
			child {node {$M_b \simeq \mathfrak{S}_3 \cup \os{0}$}
				child {node {$\mathfrak{S}_3$}}
				child {node {$(M_b)_0 \simeq \os{1}$}}
			};
			\end{tikzpicture}
			\caption{Decomposition tree of the monoid in \refex{ex:monoid}.}\label{fig:tree}
		\end{figure}
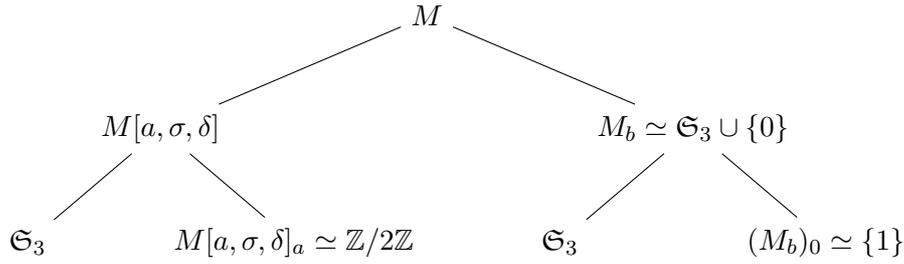
		
\end{example} 

\section{Applications}
An application of \refprop{prop:newlocrees} is the solution to an open question of Almeida and Kl{\'i}ma. Let $\bfU$ and $\bfV$ be varieties. Let $\Rees{\bfU, \bfV}$ be the variety generated by $\Rees{N,M,\rho}$ for $N\in \bfU$ and $M\in \bfV$. 
Note that in general $\Rees{\bfV} \neq \Rees{\bfV, \bfV}$. However $\Rees{\bfV}$ can be defined as the limit of this operation. Let $\bfV_i = \Rees{\bfV_{i-1},\bfV_{i-1}}$ and $\bfV_0 = \bfV$, then
\begin{equation*}
\Rees{\bfV} = \bigcup_{i\in \mathbb N} \bfV_i.
\end{equation*}
The variety $\Rees{\bfU, \bfV}$ has recently been introduced by Almeida and Kl{\'i}ma under the name of \emph{bullet operation} \cite{AlmeidaK16}. 
They defined a variety $\bfV$ to be \emph{bullet idempotent} if $\bfV = \Rees{\bfV, \bfV}$ and posed the open question whether there are varieties apart from $\ov{\bfH}$ which are bullet idempotent.
Using our decomposition above, we prove that the answer to this question is no.
\begin{theorem}
	Let $\varietyfont{V}$ be a bullet idempotent variety and let $\varietyH = \varietyfont{V} \cap \varietyG$, then $\varietyfont{V} = \varietyHline$.
\end{theorem}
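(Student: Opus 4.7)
The plan is to prove $\bfV = \ov{\bfH}$ via two inclusions. The forward inclusion $\bfV \subseteq \ov{\bfH}$ is essentially a definitional check: for any $M \in \bfV$ and any subgroup $G$ of $M$, division-closure of $\bfV$ gives $G \in \bfV$, and since $G$ is a group this forces $G \in \bfV \cap \bfG = \bfH$; hence every subgroup of $M$ lies in $\bfH$, i.e., $M \in \ov{\bfH}$. No real work is needed here.

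The substantive direction is $\ov{\bfH} \subseteq \bfV$, and the strategy is to run an induction on $\abs{M}$ driven by \refprop{prop:newlocrees}. If $M$ is a group, then $M \in \ov{\bfH}$ already means $M \in \bfH \subseteq \bfV$. If $M$ is not a group, then unfolding the inductive step in the proof of \refprop{prop:newlocrees} produces a proper submonoid $N$ of $M$ and a local divisor $M_c$ of $M$ at some nonunit $c$ such that $M$ is a homomorphic image of $\LocRees{N, M_c}$. Both $N$ and $M_c$ are proper divisors of $M$, and since $\ov{\bfH}$ is a variety they also lie in $\ov{\bfH}$; by the induction hypothesis, $N, M_c \in \bfV$. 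Because $\LocRees{N, M_c}$ is by definition the Rees extension $\Rees{N, M_c, \rho_c}$, it is one of the generators of $\Rees{\bfV, \bfV}$, hence belongs to this variety; bullet idempotence $\bfV = \Rees{\bfV, \bfV}$ then gives $\LocRees{N, M_c} \in \bfV$, and division-closure yields $M \in \bfV$ since $M \preceq \LocRees{N, M_c}$.

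The proof is therefore essentially a transcription of \refprop{prop:newlocrees} into the bullet idempotence setting, and there is no significant obstacle. The only point that must be verified is the (trivial) observation that a local Rees product is itself a Rees extension, so that a single application of the bullet operation per inductive step is enough; this is immediate from the definition of $\LocRees{N, M_c}$ as $\Rees{N, M_c, \rho_c}$ with $\rho_c(x) = cxc$. Once this is noted, the induction closes cleanly and the theorem follows.
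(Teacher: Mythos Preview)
Your proof is correct and follows essentially the same approach as the paper's. Both handle the forward inclusion via the observation that subgroups of monoids in $\bfV$ must lie in $\bfH$ (the paper phrases this as maximality of $\ov{\bfH}$ among varieties $\bfV$ with $\bfV\cap\bfG=\bfH$), and the backward inclusion by induction on $\abs{M}$, invoking the decomposition from the proof of \refprop{prop:newlocrees} to write $M\preceq \LocRees{N,M_c}=\Rees{N,M_c,\rho_c}$ with $N,M_c$ proper divisors, and then applying bullet idempotence.
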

\begin{proof}
	Since $\varietyHline$ is the maximal variety with $\varietyHline \cap \varietyG = \varietyH$, we have $\varietyfont{V} \subseteq \varietyHline$. Let $M \in \varietyHline$. Inductively, we may assume that every proper divisor of $M$ is in $\varietyfont{V}$. If $M$ is a group, then $M \in \varietyH$ and thus $M \in \varietyfont{V}$. 
	Thus, there exists an nonunit element $c \in M$ and a proper submonoid $N$ of $M$ such that $N$ and $c$ generate $M$. 
	By the calculation in the proof of \refprop{prop:newlocrees}, $M$ is a divisor of $\LocRees{N,M_c}$, and since $N, M_c \in \varietyfont{V}$ and $\bfV = \Rees{\bfV, \bfV}$ we obtain $M \in \varietyfont{V}$.
\end{proof}

Let $(\FO+\mathrm{MOD}_q)[<]$ be the fragment of first-order sentences which only use first-order quantifiers, modular quantifiers of modulus $q$ and the predicate $<$. Then the following theorem holds.
\begin{corollary}
	$(\FO+\mathrm{MOD}_q)[<](A^\infty) = \SDH{\bfSol_q}(A^\infty)$
\end{corollary}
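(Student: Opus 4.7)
The plan is to combine the main result of the paper (\refthm{thm:characterizationSD}) with a known characterization of $(\FO+\mathrm{MOD}_q)[<]$ in terms of syntactic monoids. Specifically, I will show the two equalities
\begin{equation*}
(\FO+\mathrm{MOD}_q)[<](A^\infty) \;=\; \ov{\bfSol_q}(A^\infty) \;=\; \SDH{\bfSol_q}(A^\infty),
\end{equation*}
and the corollary follows by transitivity.

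First, the right-hand equality is an immediate instantiation of \refthm{thm:characterizationSD} applied to the group variety $\bfH = \bfSol_q$: the theorem asserts $\SDH{\bfH}(A^\infty) = \ov{\bfH}(A^\infty)$ for \emph{every} variety of finite groups $\bfH$, and $\bfSol_q$ is such a variety by the definitions in \refsec{sec:prelim}. Thus no extra work is required on this side.

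Second, the left-hand equality is the logical/algebraic characterization of $(\FO+\mathrm{MOD}_q)[<]$. For finite words this is the refinement of the Straubing--Th{\'e}rien--Thomas theorem \cite{stt95IC}: a language $L \subseteq A^*$ is definable in $(\FO+\mathrm{MOD}_q)[<]$ if and only if every subgroup of $\Synt(L)$ lies in $\bfSol_q$. The extension to infinite words proceeds via the usual recipe. For the inclusion $(\FO+\mathrm{MOD}_q)[<](A^\infty) \sse \ov{\bfSol_q}(A^\infty)$, one translates each formula into an equivalent Boolean combination of languages of the form $U \cdot V^\omega$ with $U,V \sse A^*$ definable by $(\FO+\mathrm{MOD}_q)[<]$-formulas with one free variable, and then observes that such products preserve the property ``all subgroups of the syntactic monoid are in $\bfSol_q$'' (this is closed under the relevant Sch\"utzenberger-product-style constructions, as already exploited in \refsec{sec:sdclosure}). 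For the converse inclusion, one uses the standard fact that a language recognized by a finite monoid whose subgroups are in $\bfSol_q$ can be described in $(\FO+\mathrm{MOD}_q)[<]$; this is proved by induction on the size of the recognizing monoid, invoking solvability to build modular counting predicates on the group quotients and first-order predicates on the aperiodic skeleton.

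The main potential obstacle is that the paper cites \cite{stt95IC} only for finite words and with unrestricted modulus, so one must verify that both (i) the refinement to a single modulus $q$ and (ii) the passage to $A^\omega$ go through. For (i) this is classical (the relevant modular quantifier witnesses precisely the primes dividing the group orders), while for (ii) the translation between logic on $A^\omega$ and Boolean combinations of $U \cdot V^\omega$ is standard and compatible with the algebraic characterization by recognizing monoids in our sense from \refsec{sec:prelim}. Once these two facts are in hand the corollary is just a chain of three equalities.
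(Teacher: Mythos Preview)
Your proposal is correct and follows essentially the same approach as the paper: both arguments reduce the corollary to the chain $(\FO+\mathrm{MOD}_q)[<](A^\infty) = \ov{\bfSol_q}(A^\infty) = \SDH{\bfSol_q}(A^\infty)$, invoking \cite{stt95IC} (and \cite{Straubing94}) for the first equality and \refthm{thm:characterizationSD} for the second. The paper's proof is a two-line citation, whereas you additionally sketch how the extension to $A^\omega$ and the restriction to modulus $q$ are handled; this extra detail is reasonable but does not constitute a different method.
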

\begin{proof}
	By \cite{stt95IC}, see also \cite{Straubing94} for a complete treatise,  $(\FO+\mathrm{MOD}_q)[<]$ describes the family of all regular languages such that every group in the syntactic monoid is a solvable group of cardinality dividing a power of $q$, that is the languages in $\bfSol_q$. \refthm{thm:characterizationSD} then implies the stated equality.
\end{proof}
The same language class has been described by Straubing with another operation, counting how many prefixes are in a given language, which resembles more closely the counting modulo~$q$~\cite{str79}.
\section{Summary}
\begin{table}[t!]
	\centering
	\begin{tabular}{|l|c|c|c|c|c|}
		\hline
		& $\overline{\bfI}$ & $\overline{\bfAb}$ & $\overline{\bfSol}$ & $\overline{\bfSol_q}$ & $\varietyHline$ \\ \hline
		finite words   & \cite{Schutzenberger1975d}    & \cite{Schutzenberger1974b} & \cite{str79},\textbf{new} & \cite{str79},\textbf{new}   & \textbf{new, unless} $\bfH \subseteq \bfAb$    \\ \hline
		$\omega$-words & \cite{DiekertKufleitner14tocs}     & \textbf{new}     & \textbf{new} & \textbf{new} & \textbf{new, unless} $\bfH = \bfI$ \\ \hline
	\end{tabular}
	\caption{Overview of existing and new language characterizations of $\varietyHline$.}\label{fig:charact}
\end{table}
Our main theorem \refthm{thm:characterizationSD} states $\ov{\bfH}(A^\infty) = \SDH{\bfH}(A^\infty)$. An overview over the contributions for $\varietyHline$ is given in \reffig{fig:charact}. 
As a byproduct we were able to give a simple decomposition of the monoids in $\ov{\bfH}$ as local Rees products and groups in $\bfH$, using only exponentially many operations.

\bibliographystyle{plain}
\newcommand{\Ju}{Ju}\newcommand{\Ph}{Ph}\newcommand{\Th}{Th}\newcommand{\Ch}{Ch}\newcommand{\Yu}{Yu}\newcommand{\Zh}{Zh}\newcommand{\St}{St}\newcommand{\curlybraces}[1]{\{#1\}}

\newpage
\appendix
\section{Missing proofs}
All missing proofs can easily be deduced from the existing literature; and pointers have been given in previous sections. However, in order to keep 
the paper self-contained we reproduce them in our notation. 
We first give a proof of \refprop{prop:schuetzSDG}. The statement has been proved by Sch\"utzenberger. We give a detailed proof following\cite{Schutzenberger1974b} loosely.
We assume the reader to be familiar with basic concepts of formal language theory, such as deterministic finite automatons and remind the classic theorem that the transformation monoid of a minimal determistic finite automaton of a language is isomorphic to its syntactic monoid. 

\begin{proof}[Proof of \refprop{prop:schuetzSDG}]
	Note that $K^*$ is regular because $K = \bigcup\set{K_g}{g \in G}$ is regular. 
	Without restriction we may assume $K\neq \es$
	and we let $d$ be the synchronization delay of $K$. If $p$ denotes a state in  some deterministic finite automaton (DFA) and if $u \in A^*$ is a word, then we write 
	$p \mapsto p\cdot u$ to indicate that reading $u$ transforms  $p$ into the state $p\cdot u$. For $g\in G$ let 
	$Q_g$ be the state set of the minimal automaton for $K_g$ and $q_g$ the corresponding initial state. Let 
	$Q$ be the direct product of sets $Q_g$ with initial state
	$q_0= \prod \set{q_g}{g\in G}$. The product automaton allows to  
	assign to each language $K_g$ a subset $F_g \sse Q$ such that 
	the DFA $(Q,A,\cdot\,,q_0,F_g)$ accepts $K_g$. Since 
	$K_g \cap K_h = \es$ for $g\neq h$ we have $F_g \cap F_h = \es$ for $g\neq h$. 
	It is also clear that $\prod_{g\in G} \Synt(K_g)$ acts on $Q$. 
	
	By $F$ we denote union  $\bigcup \set{F_g}{g\in G}$.
	We  merge the subset 
	$\set{p\in  Q}{p \cdot A^* \cap F = \es}$ into a single sink state $\bot$. Since $K$ is a prefix code, there is no word $u\in A^+$ such that $p\cdot u \in F$ for any $p \in F$. 
	Thus, $p\cdot u = \bot$ for every $p\in F$ and $u\in A^+$. Moreover, 
	without restriction we may assume that every state is reachable from the initial state $q_0$ and by slight abuse of language, the new state space is still called $Q$. The image of $A^*$ in the transformation monoid 
	$Q^Q$ which is induced by $\sig_u: Q\to Q, p \mapsto p\cdot u$ defines a monoid $S$, the transition monoid of $Q$, and
	$S$  becomes a divisor of $\prod_{g\in G} \Synt(K_g)$.
	It is therefore enough to show that every subgroup in the syntactic monoid $\Synt(\gamma^{-1}(1))$ is either a divisor of $G$ or a divisor of $S$.
	For later use we denote by $\sig: A^* \to S$ the \homo 
	which maps $u$ to $\sig_u$. 
	
	Next, consider the product set $\wt Q= G \times (Q\sm F)$. We view $\wt Q$ as a state space of an automaton accepting 
	$\oi \gam(1)$ as follows. 
	$$(g,q)\cdot a = \begin{cases}
	(g,q\cdot a) & \text{if } q\cdot a \in Q\setminus F\\
	(gh, q_1) & \text{if } q\cdot a \in  F_h
	\end{cases}$$
	Note that the transition function is well-defined since, as mentioned above, $F_g \cap F_h = \es$ for $g\neq h$. 
	The construction defines a \homo 	$\mu: A^* \to {\wt Q}^{\wt Q}$. We let 
	$M = \mu(A^*)$. It is the corresponding transition monoid for $\wt  Q$. 
	Moreover, 
	letting $(1,q_1)\in \wt Q$ be the only final state, the resulting DFA accepts $\oi \gam(1)$ as a subset 
	of $A^*$. To see this observe that every word $u \in \oi \gam(1)^*$ belongs to $K^* \sse A^*$. Moreover, $u$ admits a unique factorization $u = u_1 \cdots u_k$ such that for all $i$ we have  
	$q_0 \cdot u_i \in F_{g_i}$ for $g_i= \gam(u_i)$ and 
	$1 = g_1 \cdots g_k$. 
	Since the DFA accepts $\oi \gam(1)$, it is enough to show that every  subgroup of $M$ is either a subgroup of $G$ or a divisor of $S$.

	Let $H$ be a subgroup of $M$. Then $H$ contains a unique idempotent $e\in M$ which is the neutral element in $H$. In particular, $H = eHe$. Let $\cH = \oi \mu(H)$. It is a nonempty subsemigroup of $A^*$. 
	The group $H$ does not act as a group on  $\wt Q$, because there
	might be states $(g,p)$  such that $(g,p)\neq (g,p)\cdot e$. However, 
	it acts faithfully on ${\wt Q}_e = \wt Q\cdot e$. Indeed, if $h\neq h'$ in $H$, then there are states $(g,p)\cdot h \neq (g,p)\cdot h'$.  
	Since $h = ehe$ and $h' =e h'e$, we have $(g,p)\cdot e\in {\wt Q}_e$, $(g,p)\cdot eh \neq (g,p)\cdot eh'$, and  
	$(g,p)\cdot eh, (g,p)\cdot eh'\in {\wt Q}_e$. 
	We distinguish two cases.
	
	{\bf Case 1.} There is a state $(g,p)\in {\wt Q}_e$ such that 
	there is a word $uv \in \cH$ where $p \cdot u \in F$. For $w = (uv)^{\abs H}$ we have $\mu(w) = e$ and $w$ factorizes as $w= uw'x$ such that
	$w'\in K^*$ and $q_0\cdot x = p$. It follows $xu \in K$. Letting $y = wuw'$ we have $yx= w^2\in \cH$ with $\mu(yx) = e$ and hence,  $(g,q_0)\cdot x = (g,p)$ implies $(g,p)\cdot y = (g,q_0)$. 
	
	The element $\mu(xy)$ is idempotent in $M$. 
	Indeed, calculating in $M$ we have: 
	$$(xy)^2 = x wuw' \cdot xwuw' = xw^3 uw' = x wuw' = xy.$$
	The subsemigroup $xHy$ contains the idempotent $xy$ and 
	$f \mapsto xfy$ defines a \homo of $H$ onto  the group $H'$ and its inverse is given $xfy \mapsto yxfyx = f$.
	As $H$ and $H'$ are isomorphic, we start all over with the idempotent 
	$e' =\mu(xy)$, the group $H'$, and its inverse image $\cH'$ instead of $e,H,\cH$. 
	
	In order to simplify the notation 
	we rename $e',H',\cH'$ as $e,H,\cH$.
	The difference is that, now, we have $(g,q_0)\cdot e = (g,q_0)$ and $\mu(xy) =e$ with $xy \in K^+$.
	Consider $(g,q) \in {\wt Q}_e$ such that $q \neq \bot$ and hence, $q$ is not the sink state of $Q$. Then there exist words $u, v \in A^*$ such that $q_0 \cdot u = q$ and $q\cdot v \in F$. Since $(g,q) = (g,q_0)\cdot u \in {\wt Q}_e$, we obtain $(g,q_0) \cdot u (xy)^d v = (g,q) \cdot v = (g', q_0)$ for some $g' \in G$. Consequently, $u (xy)^d v \in K^*$ and, by synchronization delay, we obtain $u (xy)^d \in K^*$. In particular, 
	$(g,q_0) \cdot u(xy)^d = (g, q_0)$. Thus, $(g,q) = (g,q) \cdot (xy)^d = (g,q_0) u(xy)^d = (g, q_0)$ and therefore, $q = q_0$. 
	Thus, 
	$${\wt Q}_e \subseteq\set{(g,q_0)}{g \in G} \cup \set{(g,\bot)}{g \in G}.$$
	This implies $\cH \subseteq K^*$ by the definition of the automaton.
	(The group $H$ acts trivially on $\set{(g,\bot)}{g \in G}$ and this part is irrelevant in the following.)

	Consider the mapping $\pi : H \to G$ given by $\pi(\mu(u)) = \gamma(u)$ for $u \in \cH$. 
	This mapping is well-defined, since $(g,q_0)\cdot \mu(u) = (g\cdot \gamma(u), q_0)$ for some $(g,q_0) \in {\wt Q}_e$. 
	Thus, the \homo $\gam:\cH \to G$ factorizes as follows:
	$$\gam: \cH \arc{\mu} H \arc{\pi} G.$$

	Let us show that the \homo $\pi$ is injective. We know that $H$ acts faithfully on ${\wt Q}_e$. Hence for $h\neq 1$ there is some 
	$(g,q) \in {\wt Q}_e$ such that $(g,q) \cdot h \neq (g,q)$.
	Thus, $(g,q) = (g,q_0)$ and therefore, 
	$$(g,q) \cdot h = (g \pi(h),q_0) \neq (g,q_0).$$
	This shows, as desired, $\pi(h) \neq 1$ and $H$ is a subgroup of $G$. 
	
	{\bf Case 2.} For every state $(g,p)\in {\wt Q}_e$ and every $uv \in \cH$ we have 
	$p \cdot u \notin F$.  Thus, for all $(g,p)\in {\wt Q}_e$ and all $u \in \cH$ we have 
	$$(g,p) \cdot \mu(u) = (g,p \cdot u) = (g,p \cdot \sig(u)).$$
	This means that $H$ acts faithfully on the following set
	$$Q' = \set{p \in Q}{(g,p)\in {\wt Q}_e}.$$
	Let $S'$ denote the submonoid 
	$S'= \set{s\in S}{Q'\cdot s \sse Q'}$, then $\sig(\cH) \sse S'$ and
	$H$ becomes a quotient of $S'$ and therefore, a divisor of $S$. 
	This concludes the proof.
\end{proof}

Next, we introduce a variant of Sch\"utzenberger products to give a short proof  that  the concatenation product of two languages does not introduce new groups, \cite{sch65sf}. 
Let $M$ be a finite monoid and $\varphi : A^* \to M$ be a homomorphism. We define the set
\[
[w] = \set{(\varphi(w_1),\varphi(w_2)) \in M\times M}{w=w_1w_2}.
\]
Further, we define the operations 
\begin{align*}
u \cdot [w] &= \set{(\varphi(u)m,n)}{(m,n) \in [w]}\\
[w] \cdot u &= \set{(m,n\varphi(u))}{(m,n) \in [w]}.
\end{align*}
One can check that $u\cdot [v] \cup [u]\cdot v = [uv]$.
Our variant of the Sch\"utzenberger product is defined as the monoid 
\[
\tilde{M} = \set{[w] \in M \times M}{w\in A^*}
\]
equiped with the operation $[u][v] = [uv]$. This is well-defined since $[u] = [v]$ implies $\varphi(u) = \varphi(v)$. In fact, $\tilde{\varphi} : \tilde{M} \to M$ given by $\tilde{\varphi}([w]) = \varphi(w)$ is a homomorphism. 
It is fairly easy to see that $\tilde{M}$ recognizes the concatenation product over $A^\infty$ as well, see \cite[Proposition 11.7.10]{dr95}.
\begin{proposition}
	Let $L \subseteq A^*$ and $K \subseteq A^\infty$ be languages recognized by $\varphi : A^* \to M$. Then $L \cdot K$ is recognized by the homomorphism $\psi : A^* \to \tilde{M}$ given by $\psi(w) = [w]$.
\end{proposition}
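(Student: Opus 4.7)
The plan is to show that $L \cdot K$ is closed under $\sim_\psi$, from which closure under the transitive closure $\approx_\psi$ follows automatically. Since the finite and infinite parts of $L \cdot K$ are saturated separately by the appropriate equivalences, I would split the analysis according to whether the word at hand is finite or infinite. In both cases the workhorse is the following observation: if $[v] = [u]$, then every factorization of $u$ in $A^*$ into two pieces is ``realized'' in $v$ up to $\varphi$-equivalence, which is precisely the extra information that $\tilde M$ records beyond $M$.

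For the finite case, fix $u = u_1 u_2 \in L \cdot K$ with $u_1 \in L$ and $u_2 \in K \cap A^*$, and suppose $\psi(v) = \psi(u)$, that is, $[v] = [u]$. Then $(\varphi(u_1), \varphi(u_2)) \in [u] = [v]$, yielding a factorization $v = v_1 v_2$ with $\varphi(v_i) = \varphi(u_i)$. Since $\varphi$ recognizes both $L$ and $K\cap A^*$, one has $v_1 \in L$ and $v_2 \in K$, so $v \in L \cdot K$.

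For the infinite case, suppose $u \in L \cdot K$ is infinite, say $u = u_1 \alpha$ with $u_1 \in L$ and $\alpha \in K \cap A^\omega$, and let $u \sim_\psi v$ via factorizations $u = x_1 x_2 \cdots$ and $v = y_1 y_2 \cdots$ into finite nonempty words with $[x_i] = [y_i]$. The prefix $u_1$ ends inside some block $x_k$, so I write $x_k = x' x''$ with $x_1 \cdots x_{k-1} x' = u_1$. Since $[x_k] = [y_k]$, there is a matching split $y_k = y' y''$ with $\varphi(y') = \varphi(x')$ and $\varphi(y'') = \varphi(x'')$. Then $v_1 = y_1 \cdots y_{k-1} y'$ satisfies $\varphi(v_1) = \varphi(u_1)$, hence $v_1 \in L$; and $\beta = y'' y_{k+1} y_{k+2} \cdots$ satisfies $\alpha \sim_\varphi \beta$ block by block, hence $\beta \in K$ because $\varphi$ recognizes $K$. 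Thus $v = v_1 \beta \in L \cdot K$. Minor edge cases (where $x''$ or $y''$ happens to be empty, violating the nonemptiness convention) are handled by merging the offending empty piece with its neighboring block.

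The main obstacle---and indeed the whole reason for defining $\tilde M$ via pairs rather than through $\varphi$ alone---is this realignment of the $L$/$K$-boundary when it falls strictly inside a matched block. The bare equality $\varphi(x_k) = \varphi(y_k)$ would not guarantee a compatible split of $y_k$; the structural datum $[x_k] = [y_k]$, which records \emph{all} factorizations of $x_k$ through $M \times M$, is exactly what supplies one. Once this key step is in place, the rest is bookkeeping, and closure of $L\cdot K$ under $\approx_\psi$ follows by transitivity.
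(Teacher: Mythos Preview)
Your proposal is correct and follows essentially the same approach as the paper's proof: both split into the finite and infinite cases, and in each case exploit that $[x]=[y]$ furnishes a compatible factorization of $y$ matching any given factorization of $x$, which is precisely what is needed to transport the $L/K$-boundary. The only cosmetic difference is that the paper absorbs the first $k-1$ blocks into a single block (``we may assume that $u_1=u'u''$ with $u'\in L$''), whereas you keep the boundary in a general block $x_k$ and note the empty-piece edge cases explicitly; your treatment is slightly more careful on this point, but the underlying argument is identical.
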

\begin{proof}
	Let $u = u_1u_2 \in A^*$ such that $u_1 \in L$ and $u_2 \in K$ and consider some word $v \in A^*$ such that $\psi(u) = \psi(v)$. 
	Since $(\varphi(u_1),\varphi(u_2)) \in [u] = [v]$, there exists a decomposition $v = v_1v_2$ such that $(\varphi(u_1),\varphi(u_2)) = (\varphi(v_1),\varphi(v_2))$. Consequently, $v_1 \in L$ and $v_2 \in K$, i.e., $v \in L\cdot K$. 
	
	In the case of infinite words let $u = u_1 u_2 \ldots \in L\cdot K$ and $v = v_1 v_2 \ldots$ such that $\psi(u_i) = \psi(v_i)$ for all $i\in \mathbb N$, i.e., $u \sim_\psi v$. We may assume that $u_1 = u'u''$ such that $u' \in L$ and $u'' u_2 \ldots \in K$. Again, there must exist a factorization $v_1 = v'v''$ such that $\varphi(u') = \varphi(v')$ and $\varphi(u'') = \varphi(v'')$. In particular, $v' \in L$. Since $\psi(u_i) = \psi(v_i)$ implies $\varphi(u_i) = \varphi(v_i)$, this yields $(u''u_2) u_3 \cdots \sim_\varphi (v''v_2) v_3 \cdots$ and therefore $v'' v_2 v_3 \cdots \in K$. Thus, $v \in L\cdot K$, which completes the proof.
\end{proof}

We show that every group contained in $\tilde{M}$ is a group in $M$. The argument is a slight deviation of the original argument of Sch\"utzenberger and Petrone \cite[Remark 2]{sch65sf}, in order to adapt to our variant of the Sch\"utzenberger product.
\begin{proposition}
	Let $\varphi : A^* \to M$ be a homomorphism and $\tilde M$ be the corresponding Sch\"utzenberger product. Every group $G \subseteq \tilde M$ can be embedded into $M$.
\end{proposition}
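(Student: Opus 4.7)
The plan is to show that the restriction $\tilde{\varphi}|_G$ of the homomorphism $\tilde{\varphi} : \tilde{M} \to M$, $[w]\mapsto \varphi(w)$, is injective. Then $\tilde{\varphi}(G)$ is a subgroup of $M$ isomorphic to $G$, which provides the required embedding. Equivalently, I must show that the kernel is trivial: if $[u]\in G$ satisfies $\varphi(u) = \varphi(w_0)$, where $[w_0]$ denotes the identity of $G$, then $[u] = [w_0]$ as subsets of $M\times M$.

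First I would note that $e := \varphi(w_0)$ is idempotent in $M$ (because $[w_0]^2 = [w_0]$ in $\tilde{M}$) and that the identity laws in $G$ force $e\varphi(u) = \varphi(u)e = \varphi(u)$ for every $[u]\in G$. Writing $P = [u]$ and $P_0 = [w_0]$ and adopting the shorthand $eP = \{(em,n) : (m,n)\in P\}$ and $Pe = \{(m,ne) : (m,n)\in P\}$ (and analogously for $P_0$), unfolding the set-level product in $\tilde{M}$ and using $\varphi(u) = e$ turns the identities $[w_0 u] = [u w_0] = [u]$ and $[w_0^2] = [w_0]$ into the set equations
\begin{equation*}
P = eP\cup P_0 e = Pe\cup eP_0, \qquad P_0 = eP_0\cup P_0 e.
\end{equation*}
These immediately yield the closures $eP,\,Pe\subseteq P$ and $eP_0,\,P_0 e\subseteq P_0$, together with the ``easy'' inclusion $P_0 = eP_0\cup P_0 e\subseteq P$.

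For the reverse inclusion $P\subseteq P_0$ I would exploit the fact that $[u^n] = [w_0]$, where $n$ is the order of $[u]$ in $G$, so $[u^n] = P_0$ as sets. By induction on $n$ using $[u^{k+1}] = u\cdot[u^k]\cup[u^k]\cdot u$, each iterate $[u^n]$ unfolds as a union of terms $e^i P e^j$ with $i,j\in\{0,1\}$. Thanks to $e^2 = e$ and the bootstrap closure $Pe\subseteq P$ the term $ePe$ sits inside $eP$, so the union collapses to $P_0 = eP\cup Pe$ whenever $n\geq 2$ (the case $n=1$ is immediate). Substituting this back into $P = eP\cup P_0 e\subseteq P_0\cup P_0 = P_0$ delivers $P\subseteq P_0$, and combining both inclusions gives $P = P_0$.

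The main obstacle is this stabilization step: one has to observe that the a priori growing family of subsets $e^i P e^j$ actually collapses to $eP\cup Pe$, which requires the idempotency of $e$ together with the already-established closures $eP,\,Pe\subseteq P$. This is exactly the point at which the argument diverges slightly from the classical Sch\"utzenberger--Petrone trick, where the two coordinates of the Sch\"utzenberger matrix live in genuinely separate monoids rather than both being pinned to $\varphi(u)$.
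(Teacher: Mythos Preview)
Your proof is correct and follows the same overall strategy as the paper: show that $\tilde\varphi|_G$ is injective by proving that any kernel element $[u]$ equals the identity $[w_0]$ via manipulations of the set equations coming from $[uv]=u\cdot[v]\cup[u]\cdot v$. The difference lies only in the key step. The paper uses the \emph{inverse}: from $[e]=[s][t]$ with $\varphi(s)=\varphi(t)=e$ it reads off $[s]e\subseteq[e]$ directly, hence $e[s]e\subseteq e[e]\subseteq[e]$, and then expands $[s]=[e][s][e]$ in one line. You instead use the \emph{order} of $[u]$: you compute $[u^k]$ inductively, observe that the iterates stabilize at $eP\cup Pe$ (thanks to $Pe\subseteq P$ and $e^2=e$), and conclude $P_0=eP\cup Pe\supseteq eP$, which feeds back into $P=eP\cup P_0e\subseteq P_0$. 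Both routes are short; the paper's avoids the induction by exploiting the inverse element, while yours never needs to name the inverse and makes the stabilization explicit.
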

\begin{proof}
	Let $[e]$ be the unit in $G$. Consider again the homomorphism $\tilde{\varphi} : \tilde{M} \to M$. Since $G$ is finite, the set $N =  \set{[w]\in G}{\tilde{\varphi}([w]) = \varphi(w) = \varphi(e) = \tilde{\varphi}([e])}$ is a subgroup of $N$. In fact, $N$ is normal and $G/N$ is isomorphic to $\tilde{\varphi}(G)$, which is a group in $M$. Thus, it remains to show $N = \oneset{[e]}$, i.e., $\tilde{\varphi}$ is injective on $G$. 
	
	Let $[s] \in N$ be an arbitrary element and $[t] \in N$ be its inverse. Then, the following equations holds:
	\begin{multicols}{3}
		\begin{itemize}
			\item $[e]^2 = [e]$
			\item $[e] = [s][t]$
			\item $[s] = [e][s][e]$
		\end{itemize}
	\end{multicols}
	By the first equation we have $[e] = e[e] \cup [e]e$. 
	
	By the second equation and $\varphi(s) = \varphi(t) = \varphi(e)$, it holds $[e] = s[t] \cup [s]t = e[t] \cup [s]e$. 
	Since $e[e] \subseteq [e]$, we conclude $e[s]e \subseteq [e]$.
	Finally, using the third equation, we obtain $$[s] = e([s][e]) \cup [e]se = e(s[e]\cup [s]e) \cup [e]e = e[e] \cup e[s]e \cup [e]e = [e] \cup e[s]e = [e].$$
\end{proof}
\end{document}